\def\BibTeX{{\rm B\kern-.05em{\sc i\kern-.025em b}\kern-.08em
    T\kern-.1667em\lower.7ex\hbox{E}\kern-.125emX}}
\begin{document}


\title{Nested Dataflow Algorithms for Dynamic Programming 
Recurrences with more than O(1) Dependency
\thanks{Shanghai Natural Science Funding (No. 18ZR1403100)}
}



\punt{
\author[1]{Yuan Tang\corref{cor}\fnref{aff}}
\cortext[cor]{Corresponding Author} 
\fntext[aff]{Also affiliated with the Shanghai Key Lab. of 
Intelligent Information Processing}
\ead{yuantang@fudan.edu.cn}
\address[1]{
School of Software, School of Computer Science, Fudan University,
220 Handan Road, Shanghai 200433, P.R.China}
}

\author{\IEEEauthorblockN{Yuan Tang}
    \IEEEauthorblockA{\textit{School of Computer Science, Fudan University}\\
    Shanghai, P. R. China\\
    yuantang@fudan.edu.cn}
}
\maketitle


\begin{abstract}

Dynamic programming problems have wide applications in real world
and have been studied extensively in both serial and parallel
settings. In 1994, Galil
and Park developed work-efficient and sublinear-time algorithms 
for several important dynamic programming problems based on
the closure method and matrix
product method. However, in the same paper, they raised an open
question whether such an algorithm exists for the general GAP 
problem.
In this paper, we answer their question by developing the first 
work-efficient and sublinear-time GAP
algorithm based on the closure method and Nested Dataflow method.
We also improve the time bounds of classic work-efficient,
cache-oblivious and cache-efficient algorithms for the 1D 
problem and GAP problem, respectively.
\punt{
Though Galil and Park's original paper did not consider 
space and cache complexities, these
two requirements are important for an algorithm to achieve
good performance in practice. 
}
It remains an interesting question if we can further bound the
GAP algorithm's space and cache bounds to be asymptotically 
optimal without sacrificing work or time bounds.
\end{abstract}

\punt{
\begin{CCSXML}
<ccs2012>
 <concept>
  <concept_id>10003752.10003809.10011254.10011257</concept_id>
  <concept_desc>Theory of computation~Divide and conquer</concept_desc>
  <concept_significance>500</concept_significance>
 </concept>
 <concept>
  <concept_id>10003752.10003809.10011254.10011258</concept_id>
  <concept_desc>Theory of computation~Dynamic programming</concept_desc>
  <concept_significance>500</concept_significance>
 </concept>
 <concept>
  <concept_id>10010147.10010169.10010170.10010171</concept_id>
  <concept_desc>Computing methodologies~Shared memory algorithms</concept_desc>
  <concept_significance>500</concept_significance>
 </concept>
</ccs2012>
\end{CCSXML}

\ccsdesc[500]{Theory of computation~Divide and conquer}
\ccsdesc[500]{Theory of computation~Dynamic programming}
\ccsdesc[500]{Computing methodologies~Shared memory algorithms}
}

\begin{IEEEkeywords}
dynamic program with more than $O(1)$ dependency,
closure method,
cache-oblivious method,
Nested Dataflow method,
work-time model
\end{IEEEkeywords}


\maketitle

\secput{intro}{Introduction}
%
\begin{figure*}[!ht]
\centering
\begin{minipage}[t]{0.6 \linewidth}
\scalebox{0.9}{
\begin{tabular}{ccccc}
    \toprule
    Algorithm & Work ($T_1$) & Time ($T_\infty$) & Space & Cache ($Q_1$)\\
    \midrule
    Galil and Park's & \multirow{2}{*}{$O(n^2)$} & \multirow{2}{*}{$O(\sqrt{n} \log n)$} & \multirow{2}{*}{$O(n^2)$} & \multirow{2}{*}{$O(n^2/B)$}\\
    final 1D \cite{GalilPa94} & & & & \\
    COP 1D & $O(n^2)$ & $O(n \log n)$ & $O(n)$ & $O(n^2/(BM))$\\
    \emph{space-/cache-efficient} & \multirow{2}{*}{$O(n^2)$} & \multirow{2}{*}{$O(n)$} & \multirow{2}{*}{$O(n)$} & \multirow{2}{*}{$O(n^2/(BM))$}\\
    ND 1D (\thmref{nd-1D}) & & & & \\
    \emph{sublinear-time} & \multirow{2}{*}{$O(n^2)$} & \multirow{2}{*}{$O(\sqrt{n} \log n)$} & \multirow{2}{*}{$O(n^2)$} & \multirow{2}{*}{$O(n^2/B)$}\\
    ND 1D (\thmref{sub-nd-1D}) & & & & \\
    \midrule
    Galil and Park's & \multirow{2}{*}{$O(n^4)$} & \multirow{2}{*}{$O(\sqrt{n} \log n)$} & \multirow{2}{*}{$O(n^4)$} & \multirow{2}{*}{$O(n^4/B)$}\\
    final GAP\cite{GalilPa94} & & & & \\
    COP GAP \cite{ChowdhuryRa06, Chowdhury07} & $O(n^3)$ & $O(n^{\log_2 3})$ & $O(n^2)$ & $O(n^3/(B\sqrt{M}))$\\
    \emph{space-/cache-efficient} & \multirow{2}{*}{$O(n^3)$} & \multirow{2}{*}{$O(n \log n)$} & \multirow{2}{*}{$O(n^2)$} & \multirow{2}{*}{$O(n^3/(B\sqrt{M}))$}\\
    ND GAP (\thmref{nd-gap}) & & & & \\
    \emph{sublinear-time} & \multirow{2}{*}{$O(n^3)$} & \multirow{2}{*}{$O(n^{3/4} \log n)$} & \multirow{2}{*}{$O(n^3)$} & \multirow{2}{*}{$O(n^3/B)$}\\
    ND GAP (\thmref{sub-nd-gap}) & & & & \\
    \bottomrule
\end{tabular}%
}
\caption{Main results of this paper, with comparisons to 
    typical prior works.}
\label{fig:contri-table}
\end{minipage}
\hfil
\begin{minipage}[t]{0.38\linewidth}
\scalebox{0.85}{
\begin{tabular}{ll}
\toprule
Notation & Explanation\\
\midrule
DP  & Dynamic Programming\\
COP & Cache-Oblivious Parallel\\
ND  & Nested Dataflow\\
$n$ & problem dimension\\
$p$ & \# of cores\\
$\epsilon_i$ & small constant\\
$M$ & cache size\\
$B$ & cache line size\\
$T_1$ & work\\
$T_\infty$ & time (span, depth, critical path length)\\
$T_1/T_{\infty}$ & parallelism\\
$Q_1$ & serial cache complexity\\
$\id{a} \parallel \id{b}$ & task \id{b} has \emph{no} dependency on \id{a}\\
$\id{a} \serial \id{b}$ & task \id{b} has \emph{full} dependency on \id{a}\\
$\id{a} \fire \id{b}$ & task \id{b} has \emph{partial} dependency on \id{a}\\
\bottomrule
\end{tabular}
}
\caption{Acronyms \& Notation}
\label{fig:symbols}
\end{minipage}
\end{figure*}%

%
%
%
Dynamic Programming (DP) is a general problem-solving technique 
that solves a large problem optimally by recursively breaking
down into sub-problems and solving those sub-problems optimally
\cite{wikiDP}. If a problem can be solved this way, it is said 
to have optimal substructure.
DP has wide applications in various fields from
aerospace engineering, control theory, operation research, 
biology, economics, and computer science \cite{GalilGi89, 
GalilPa92, GalilPa94}. 

%
By the nature, a problem solvable by the DP technique, is 
usually presented by a set of recurrence equations. 
\aeqreftwo{1D}{gap} are two such examples. In solving a 
set of DP recurrences, it usually boils down to update 
every entries of a multi-dimensional table by some order. 
It is this serial order that prevents an efficient parallelization.

%
In the literature, there are two classic ways to parallelize a 
DP algorithm. 
Based on the well-known closure method and matrix product method,
Galil
and Park \cite{GalilPa94} gave out several work-efficient and
sublinear-time algorithms for DP recurrences with more than
$O(1)$ dependency, including the 1D problem \cite{HirschbergLa87}, 
the parenthesis problem which computes the minimum cost of
parenthesizing $n$ elements \cite{Yao80}, and the secondary 
structure of RNA without multiple loops \cite{WatermanSm78}. 
They raised an open problem in their paper 
if there exists a better algorithm for the general
edit distance problem when allowing gaps of insertions and 
deletions \cite{GalilGi89} (the GAP problem). 
On the other hand, Chowdhury, Le, and Ramachandran 
\cite{ChowdhuryRa06, Chowdhury07, ChowdhuryLeRa10} developed
a set of cache-oblivious parallel (COP) and cache-efficient
algorithm for DP problems with $O(1)$ or more than $O(1)$ 
dependency.  Their algorithms are usually work-efficient
but super-linear in time. 
Dinh et al \cite{TangYoKa15, DinhSiTa16} observed that the 
classic COP method may introduce excessive control
dependency among recursively derived sub-problems, and that
excessive control dependency actually increases the time 
bound (critical-path length of its computational DAG) of 
algorithm, therefore extended the method to the 
Nested Dataflow (ND) method. 
\punt{
However the latest COP approach still can not achieve a sublinear
time bound like that of the closure and matrix product methods 
\cite{GalilPa94}.
}
Besides the above two big classes of DP algorithms, we will 
discuss other related works in \secref{concl}.

\paragrf{Problems: }
Noticing that Galil and Park's approach does not give an 
work-efficient and sublinear-time algorithm for the general 
GAP problem
and the latest COP approach does not achieve a sublinear time
bound, we present a new framework for the parallel
computation of DP recurrences with more than $O(1)$ dependency
based on a novel combination of the closure method and ND method.
We demonstrate
the framework by working on the following two problems:

\begin{enumerate}[P1.]
    \item Given a real-valued function $\proc{w}(\cdot, \cdot)$
        , which can be computed with no memory access in $O(1)$ 
        time, and initial value $D[0]$, compute 
        \begin{align}
        D[j] &= \min_{0 \leq i < j} {D[i] + w(i, j)} & & \text{for $1 \leq j \leq n$} 
        \label{eq:1D}
        \end{align}
        This problem was called the least weight subsequence (LWS)
        problem by Hirschberg and Larmore \cite{HirschbergLa87}.
        We will call it \emph{1D} problem following the convention
        of Galil and Park \cite{GalilPa94}. 
        Its applications include, but not limited to, the optimum
        paragraph formation and finding a minimum height B-tree.
        We will study the 1D problem in \secref{sub-1D} as
        a prerequisites and 1D simplification of the more 
        complicated GAP problem.

    \item Given $w$, $w'$, $s_{ij}$, which can be computed in 
        $O(1)$ time with no memory access, and $D[0, 0] = 0$,  
        compute
        \begin{align}
        D[i, j] = \min
            \left\{
            \begin{array}{l}
            D[i - 1, j - 1] + s_{ij} \\
            \min_{0 \leq q < j} \{D[i, q] + w(q, j)\} \\
            \min_{0 \leq p < i} \{D[p, j] + w'(p, i)\} 
            \end{array}
            \right.
        \label{eq:gap}
        \end{align}
        for $0 \leq i \leq m$ and $0 \leq j \leq n$. 
        We assume that $m$ and $n$ are of the same order of
        magnitude. This is the problem of computing the edit
        distance when allowing gaps of insertions and deletions
        \cite{GalilGi89}. We will call it \emph{GAP} problem
        following the convention of Galil and Park 
        \cite{GalilPa94}.
        Its applications include, but not limited to, molecular
        biology, geology, and speech recognition.
\end{enumerate}

\paragrf{Our Results (\figref{contri-table}): }
\begin{enumerate}
    \item The 1D problem: 
        \begin{enumerate}
            \item We give out a linear $O(n)$ time and optimal 
                $O(n^2)$ work 1D algorithm (\thmref{nd-1D}
                in \secref{sub-1D}), which achieves optimal 
                $O(n)$ space and optimal
                $O(n^2/(BM))$ cache bounds in a cache-oblivious
                fashion. This result improves over the prior
                cache-oblivious and cache-efficient algorithm 
                on time bound.

            \item We give out a sublinear $O(\sqrt{n} \log n)$
                time and optimal $O(n^2)$ work 1D algorithm 
                (\thmref{sub-nd-1D} in \secref{sub-1D}) with
                non-optimal $O(n^2)$ space and $O(n^2/B)$
                cache bounds. This algorithm provides new 
                insight on how 
                to solve DP recurrences with more than $O(1)$ 
                dependency and is a prerequisite for the 
                more complicated GAP problem.
        \end{enumerate}

    \item The GAP problem: 
        \begin{enumerate}
            \item We give out a superlinear $O(n \log n)$ time,
                optimal $O(n^3)$ work GAP algorithm 
                (\thmref{nd-gap} in \secref{sub-gap}), which 
                achieves optimal $O(n^2)$ space and optimal 
                $O(n^3/(B\sqrt{M}))$ cache
                bounds in a cache-oblivious fashion. This
                result improves over prior cache-oblivious
                and cache-efficient algorithm 
                \cite{ChowdhuryRa06, Chowdhury07} on time bound.
        
            \item we give out the first sublinear
                $O(n^{3/4} \log n)$ time and optimal $O(n^3)$ 
                work algorithm for the general GAP problem 
                (\thmref{sub-nd-gap} in \secref{sub-gap}) 
                with non-optimal
                $O(n^3)$ space and $O(n^3/B)$ cache bounds.
                This result improves over Galil and Park's
                sublinear time algorithm 
                \cite{GalilPa94} on work bound, thus answers
                their open question.
        \end{enumerate}
\end{enumerate}
\punt{
However, Galil and Park's original paper did not consider space
and cache requirements of parallel algorithms, which are also
important for a good performance. Though the new GAP algorithm 
designed in this paper has better space and cache bounds, it
is not optimal. It remains an interesting
question how to further bound these two requirements to be 
asymptotically optimal without sacrificing the work and time
bounds.
}

\punt{
\paragrf{Organization: }
\asecref{model} briefly describes the parallel model we use to
calculate all theoretical bounds and recaps the closure method
and ND method; \asecreftwo{sub-1D}{sub-gap} then illustrate
our framework by working on the 1D problem and GAP
problem, respectively; \asecref{concl} concludes the paper 
with an updated open problem, as well as related works.
}

\secput{model}{Theoretical Models}

%

\paragrf{Parallel Model: }
We adopt the work-time model \cite{JaJa92}
(also known as work-span model \cite{CormenLeRi09})
to calculate work and time complexities.
The model views a parallel computation as a Directed Acyclic 
Graph (DAG). Each vertex stands for a piece of computation 
with no parallel construct and each directed edge 
represents some control or data dependency between the pair 
of vertices.
For simplicity, we count every arithmetic operation such as
multiplication, addition, and comparison uniformly as an 
$O(1)$ operation.
The model calculates an algorithm's time complexity ($T_\infty$) 
by counting the number of arithmetic operations along 
its DAG's critical path. 
\punt{
As pointed out in \cite{TangYoKa15, DinhSiTa16}, any extra 
control dependency to data dependency in an algorithm is 
artificial dependency and can only hurt critical path length,
hence should be eliminated by techniques such as Nested Dataflow 
(ND) model \cite{DinhSiTa16}. 
}
Work ($T_1$) is then the sum over all vertices, with 
parallelism defined as $T_1 / T_\infty$.
Time ($T_\infty$) and work ($T_1$) bounds characterize 
the running time of a parallel algorithm on infinite number 
and one processor(s), respectively.
\punt{
This paper assumes a Randomized Work-Stealing (RWS) scheduler 
that has ``busy-leaves'' property as specified in 
\cite{BlumofeJoKu95}. Busy-leaves property 
says that from the time a task is spawned to the time it 
finishes, there is always
at least one subtask from the subcomputation rooted at it 
that is ready. In other words, no leaf task can stall or be 
preempted. Busy-leaves property holds in Cilk
system \cite{IntelCilkPlus10, Leiserson10} and will be 
extended by our \thmref{remote-blocking}.
By RWS scheduler, a better time bound (shorter critical path 
length) means more work available for randomized stealing
at each parallel step at runtime, hence a better
balance.
}
We call a parallel algorithm \defn{work-efficient} and / or 
\defn{cache-efficient} if its total 
work $T_1$ and / or serial cache bound $Q_1$ matches 
asymptotically that of the best serial 
algorithm for the same problem, respectively.
Analogously, we have the notion of \defn{space-efficient}.
We call a parallel algorithm \defn{sublinear-time} if its
time bound $T_\infty$ is sublinear to the problem dimension 
$n$, i.e. $T_\infty = o(n)$.

\paragrf{Memory Model: }
By the convention of COP algorithms, we calculate only a parallel
algorithm's serial cache complexity, i.e. serialize all its 
parallel 
constructs by some order as if it were run on one 
single computing core, in the ideal cache model 
\cite{FrigoLePr12}.
This simplification is reasonable because 
a corresponding parallel cache complexity under a 
randomized work-stealing (RWS) runtime scheduler 
can be derived directly by 
$Q_p = Q_1 + O(p T_\infty M/B)$
\cite{AcarBlBl00, SpoonhowerBlGi09}.  
In the rest of paper, the term ``cache bound (complexity)'' 
stands for ``serial cache bound (complexity)'' unless 
otherwise specified.

The ideal cache model has an upper level cache of size $M$ and an 
unbounded lower level memory. Data exchange between the upper 
and lower level is coordinated by an omniscient (offline optimal) 
cache replacement algorithm in cache line of size $B$. It also 
assumes a tall cache, i.e. $M = \Omega(B^{2})$.

To accommodate parallel execution, 
we further assume that the lower level memory follows CREW 
(Concurrent Read Exclusive Write) convention \cite{GibbonsMaRa99}.
Every concurrent 
reads from the same memory location can be accomplished in 
$O(1)$ time, while $n$ concurrent writes to the same memory cell 
have to be serialized by some order
\punt{
The serialization can be accomplished by atomic operations such
as Compare-And-Swap (CAS) or by system's synchronization facility
such as ``\CilkSync'' in the Cilk system.
}
and take $O(n)$ total time to accomplish. 
By Brent's theorem \cite{Brent74} 
\punt{
(Theorem 1, Sect. 1 of 
\cite{GalilPa94})
}
, the above work-time model is justified.
\punt{
By \eqref{parallel-cache-bound}, $Q_p - Q_1 = O(P T_\infty M/B)$, 
i.e. the extra cache misses in a parallel execution to its 
serial run is proportional to the critical path length $T_\infty$.
Hence, a shorter critical path length also means less parallel 
cache misses.
}


%
\punt{
\paragrf{The Closure Method: }
Galil and Park \cite{GalilPa94} firstly reduces the 1D and 
GAP problems to the
problem of finding the shortest path of a multi-dimensional
table, i.e. matrix (Lemma 1, Sect. 2.2. of \cite{GalilPa94}), 
then adapts the basic closure method (squaring operation, i.e.
general matrix multiplication on a closed semiring)
to compute shortest path of general graphs \cite{KarpRa90}
to solve the reduced 1D and GAP problems (Lemma 2, Sect. 2.2.
of \cite{GalilPa94}).
\punt{
Formally, given a closed semiring $SP = (R, \min, +, +\infty, 0)$, 
where $R$ is a set of non-negative
reals including $+\infty$, $\min$ is the additive operator and  
$+$ is the multiplicative operator with idempotence 
(i.e. $a + a = a$, $\forall a \in R$),
$+\infty$ and $0$ are additive and multiplicative identities, 
respectively. 
If we define a matrix $H$, each of whose entry stands
for the length of edge connecting the two vertices, $+\infty$
if there is no direct connection, 
under the usual definition of matrix operation, the closure 
$H^* = H^n$ (Lemma 2, Sect. 2.2. of \cite{GalilPa94})
then gives out the lengths of shortest paths between all pairs
of entries of the matrix $H$. 
}
More details will be illustrated
by the 1D example in \secref{sub-1D}.
}

\paragrf{The Nested Dataflow (ND) Model: }
Tang et al. \cite{TangYoKa15} firstly observe that classic
COP method may introduce excessive control dependency
among recursively derived sub-problems, which un-necessarily
lengthens DAG's critical path. Dinh et al. \cite{DinhSiTa16} 
formalize an ND method based on the observation.
The ND method bears some similarity to pipelining 
technique \cite{JaJa92}, future \cite{Halstead84, Halstead85, BlellochRe97, HerlihyLi14}
and / or synchronization variable \cite{BlellochGiNa97}. 
But instead of explicitly chaining all vertices
in a DAG beforehand, the ND method recursively 
, where the term ``\emph{Nested}'' comes from, refines
and expands the DAG on only data dependency, where the term
``\emph{Dataflow}'' comes from, in a lazy fashion.
Therefore, it not only shortens critical path, i.e. 
time bound, or equivalently
maximizes parallelism, of an algorithm, but also 
achieves cache efficiency in a cache-oblivious fashion 
\cite{FrigoLePr12} because it keeps the recursive 
executing order among vertices.
More discussions on the differences between the ND method
and related works can be found in \cite{DinhSiTa16}.

We describe the ND method in general as follows.
The method employs a new dataflow operator ``$\fire$'' 
(Pronounced ``Fire'') to
address the notion of \defn{partial dependency}. 
If we define a \defn{task} as a \emph{set} of vertices of a DAG,
a partial dependency between a pair of tasks \id{a} and \id{b} 
, denoted by``$\id{a} \fire \id{b}$'',
indicates that a subset of subtasks of \id{b} depends
on a subset of subtasks of \id{a}. 
That is to say, \id{b} will get notified by some signal 
, from either runtime system or 
\id{a} depending on implementation, and can start executing 
some subset of its subtasks when their data dependency 
are satisified. 
If an algorithm follows a divide-and-conquer framework,
as many COP algorithms does, the ``$\fire$'' operator will
preserve 
recursive executing order among subtasks of \id{a} and \id{b}
 to attain cache efficiency in a cache-oblivious fashion.
A partial dependency will be refined, recursively if it's a 
recursive algorithm, by fire rules throughout computation. 
\punt{
\footnote{
In a practical manner, a computation at runtime does not 
have to refine and expand all vertices of its DAG beforehand,
it can refine partial dependencies and expand the DAG
in a lazy fashion. That is, it does not have to expand
a task until at least one of its subtasks can be expanded, i.e.
some subtask having all its input data available, and does not
have to refine a partial dependency $\id{a} \fire \id{b}$ until
either \id{a} or \id{b} can be expanded.
}
}
%
%
A partial dependency will stop refinement until both its 
source and sink tasks become leaf vertices (vertices with no 
parallel construct), in which case the ``$\fire$'' operator 
reduces to a classic ``$\serial$'' (Serial) construct. 
By the definition, the classic ``$\parallel$'' (Parallel) 
and ``$\serial$'' (Serial)
constructs are just syntactic sugar for the two extreme cases. 
That is, notation ``$\id{a} \parallel \id{b}$'' indicates 
that no subtasks of \id{b} depends on any subtasks of \id{a}, 
i.e. \emph{no} dependency, while ``$\id{a} \serial \id{b}$'' 
says that all subtasks of \id{b} depend on all subtasks of 
\id{a}, i.e. a \emph{full} dependency. 

\secput{sub-ND-Algo}{Work-Efficient and Sublinear-Time 
Nested Dataflow Algorithms}

\subsecput{sub-1D}{The 1D Problem}

\paragrf{Organization: }
We firstly recap Galil and Park's sublinear-time 
algorithm \cite{GalilPa94} by \thmref{old-sublinear-1D},
which serves as the foundation for later discussions; 
Then we address several components, i.e. 
a classic COP algorithm by \lemref{cop-1D},
an improved ND version by \thmref{nd-1D}.
Finally, we present our main \thmref{sub-nd-1D} for
the 1D problem constructed from these components.

\subsubsecput{old-sublinear-1D}{Old Work-Efficient and 
Sublinear-Time 1D algorithm}
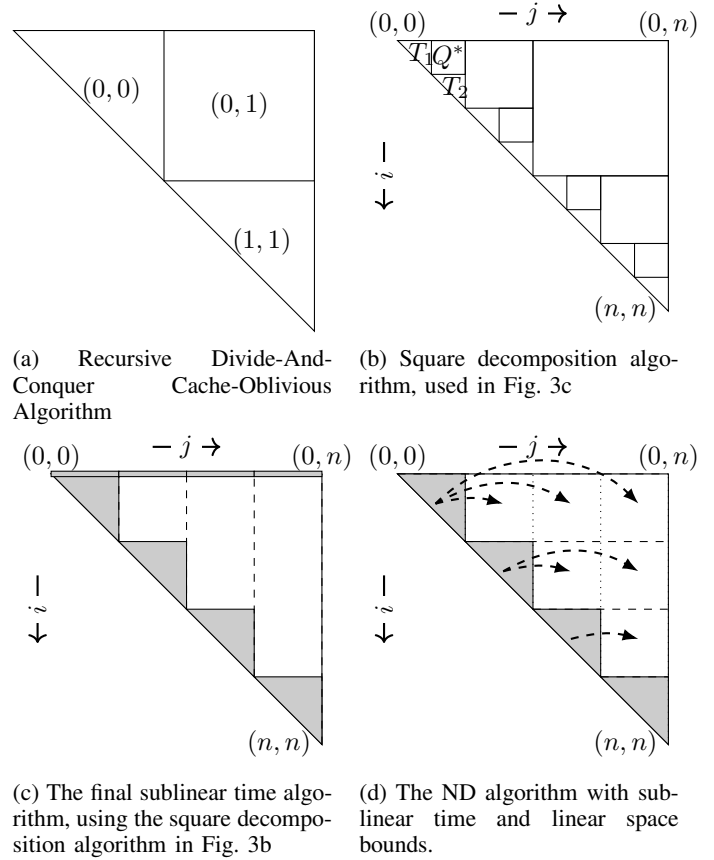
\begin{figure}[!ht]
\begin{subfigure}[t]{0.48 \linewidth}
\begin{tikzpicture}[scale = 2]
\path[draw] (0, 2) -- (2, 2) -- (2, 0) -- cycle;
\draw (1, 1) rectangle (2, 2);
\node [xshift = 3mm, yshift = 2mm] at (0.5, 1.5) {$(0, 0)$};
\node at (1.5, 1.5) {$(0, 1)$};
\node [xshift = 3mm, yshift = 2mm] at (1.5, 0.5) {$(1, 1)$};
\end{tikzpicture}
\caption{Recursive Divide-And-Conquer Cache-Oblivious Algorithm}
\label{fig:co-1D}
\end{subfigure}
\hfill
\begin{subfigure}[t]{0.48 \linewidth}
\begin{tikzpicture}[scale = 1.8]
\draw [thick, ->] (0.75, 2.2) -- (1.25, 2.2) node[fill=white, pos=.5]{$j$}; 
\draw [thick, ->] (-0.1, 1.25) -- (-0.1, 0.75) node[fill=white, rotate=90, pos=.5]{$i$}; 
\node[above, yshift=-0.8mm] at (0, 2) {$(0, 0)$};
\node[above, yshift=-0.8mm] at (2, 2) {$(0, n)$};
\node[left, xshift=-0.2mm, yshift=0.2mm ] at (2, 0) {$(n, n)$};
\path[draw] (0, 2) -- (2, 2) -- (2, 0) -- cycle;
\draw (1, 1) rectangle (2, 2);
\foreach \x in {0.5, 1.5} {
    \draw (\x, 2 - \x) rectangle (\x + 0.5, 2 - \x + 0.5);
}
\foreach \x in {0.25, 0.75, 1.25, 1.75} {
    \draw (\x, 2 - \x) rectangle (\x + 0.25, 2 - \x + 0.25);
}
\node [xshift = 1mm, yshift = 0.5mm] at (0.125, 2 - 0.125) {$T_1$};
\node [xshift = 1mm, yshift = 0.5mm] at (0.125 + 0.25, 2 - 0.125 - 0.25) {$T_2$};
\node at (0.125 + 0.25, 2 - 0.125) {$Q^*$};
\end{tikzpicture}
\caption{Square decomposition algorithm, used in \figref{sublinear-1D}}
\label{fig:1D-square-decomp}
\end{subfigure}
\vfill
\begin{subfigure}[t]{0.48 \linewidth}
\begin{tikzpicture}[scale = 1.8]
\draw [thick, ->] (0.75, 2.2) -- (1.25, 2.2) node[fill=white, pos=.5]{$j$}; 
\draw [thick, ->] (-0.1, 1.25) -- (-0.1, 0.75) node[fill=white, rotate=90, pos=.5]{$i$}; 
\node[above, yshift=-0.8mm] at (0, 2) {$(0, 0)$};
\node[above, yshift=-0.8mm] at (2, 2) {$(0, n)$};
\node[left, xshift=-0.2mm, yshift=0.2mm ] at (2, 0) {$(n, n)$};
\path[draw] (0, 2) -- (2, 2) -- (2, 0) -- cycle;
\foreach \x in {0, 0.5, 1, 1.5} {
    \filldraw[fill=gray!40] (\x, 2 - \x) -- (\x + 0.5, 2 - \x) -- (\x + 0.5, 1.5 - \x) -- cycle;
    \draw [dashed] (\x + 0.5, 1.5 - \x) -- (\x + 0.5, 2);
    \filldraw[fill=gray!40] (\x, 2 - 0.02) rectangle (\x + 0.5, 2 + 0.02);
}
\end{tikzpicture}
\caption{The final sublinear time algorithm, using the square 
decomposition algorithm in \figref{1D-square-decomp}}
\label{fig:sublinear-1D}
\end{subfigure}
\hfill
\begin{subfigure}[t]{0.48 \linewidth}
\begin{tikzpicture}[scale = 1.8]
\draw [thick, ->] (0.75, 2.2) -- (1.25, 2.2) node[fill=white, pos=.5]{$j$}; 
\draw [thick, ->] (-0.1, 1.25) -- (-0.1, 0.75) node[fill=white, rotate=90, pos=.5]{$i$}; 
\node[above, yshift=-0.8mm] at (0, 2) {$(0, 0)$};
\node[above, yshift=-0.8mm] at (2, 2) {$(0, n)$};
\node[left, xshift=-0.2mm, yshift=0.2mm ] at (2, 0) {$(n, n)$};
\path[draw] (0, 2) -- (2, 2) -- (2, 0) -- cycle;
\foreach \x [count=\xi] in {0, 0.5, 1, 1.5} {
    \filldraw[fill=gray!40] (\x, 2 - \x) -- (\x + 0.5, 2 - \x) -- (\x + 0.5, 1.5 - \x) -- cycle;
    \draw [dashed] (\x, 2 - \x) -- (2, 2 - \x);
    \draw [dotted] (\x + 0.5, 1.5 - \x) -- (\x + 0.5, 2);
}    
\foreach \x [count=\xi] in {0, 0.5, 1} {
    \foreach \y [count=\yi] in {\xi, ..., 3} {
        \draw [thick, dashed, path arrow] (\x + 0.28, 2 - \x - 0.22) to [out = \yi * 15, in = 180 - \yi * 15] (\x + 0.28 + \yi * 0.5, 2 - \x - 0.22);
    }
}    
\end{tikzpicture}
\caption{The ND algorithm with sublinear time and linear space 
bounds.}
\label{fig:nd-1D}
\end{subfigure}
\caption{Different Algorithms for the 1D problem}
\label{fig:1D}
\end{figure}

The following algorithm is extracted from Sect. 2 of Galil and
Park's original work \cite{GalilPa94}. 
They firstly reduce the problem of solving the recurrences of 
\eqref{1D} to a shortest path problem, 
then solve the shortest path by squaring (the closure method,
i.e. general matrix multiplication on a semiring). 
Finally, they 
reduce the amount of work by the method of indirection.

The main steps are recapped as follows.
It defines a matrix $H$ as follows such that each entry 
$H(i, j)$ denotes the current 
shortest path from coordinate $i$ to $j$.
\begin{align}
H(i, i) &= 0 & & \text{for $0 \leq i \leq n$} \nonumber \\
H(i, j) &= w(i, j) & & \text{for $0 \leq i < j \leq n$} \label{eq:1D-H}\\
H(i, j) &= +\infty & & \text{for $i > j$} \nonumber
\end{align}
It then defines a squaring operation on $H$ as follows.
\begin{align}
    H^2 (i, j) &= \min_{\substack{i \leq r \leq j}} \{H(i, r) + H(r, j)\} 
    & & \text{for $0 \leq i < j \leq n$} \label{eq:1D-squaring} 
\end{align}
$H^k (i, j)$ is then the length of shortest path from $i$ to $j$
via at most $k$ edges, and the closure $H^*$ ($H^* = H^n$, 
according
to Lemma 2 of \cite{GalilPa94}) contains the lengths of shortest 
paths between all pairs of coordinates. Hence, solving the 1D
recurrence 
reduces to finding the shortest path from $0$ to $n$, i.e. from
$(0, 0)$ to $(0, n)$, which is depicted by the top shaded row in 
\figref{sublinear-1D}.

The squaring operation is basically a general matrix 
multiplication (MM) on a closed semiring 
and takes the computational overheads in \lemref{mm-n3-space}
\cite{CormenLeRi09} to accomplish.  
\begin{lemma} [\cite{CormenLeRi09}]
    General MM of size $n$, i.e. an $n$-by-$n$
    matrix multiplies another $n$-by-$n$ matrix, on a closed
    semiring can be computed
    in $O(\log n)$ time, $O(n^3)$ work, $O(n^3)$ space
    and $O(n^3/B)$ cache misses.
    \label{lem:mm-n3-space}
\end{lemma}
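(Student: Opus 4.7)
The plan is to analyze the standard closed-semiring matrix multiplication algorithm that first materializes all $n^3$ pairwise semiring products and then reduces them in parallel. Concretely, allocate an auxiliary three-dimensional array $P$ of size $n^3$, compute $P[i,j,k] \leftarrow A[i,k] \otimes B[k,j]$ independently for all triples $(i,j,k)$, and then for each output entry $C[i,j]$ combine the $n$ values $P[i,j,0], \dots, P[i,j,n-1]$ using the additive operator $\oplus$ (here $\min$) along a balanced binary reduction tree.

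First I would verify the work and time bounds, which are immediate: there are $n^3$ independent pairwise products, each $O(1)$, contributing $O(n^3)$ work and $O(1)$ time. Each of the $n^2$ reduction trees has $n-1$ internal nodes and depth $\lceil \log_2 n \rceil$, contributing another $O(n^3)$ work and $O(\log n)$ time. Since independent entries $C[i,j]$ are computed in parallel and each reduction tree is fully parallel internally, the overall span is $O(\log n)$ and the overall work is $O(n^3)$. The space bound follows by inspection: the $P$ array dominates at $O(n^3)$, with the inputs and output contributing only $O(n^2)$.

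For the cache bound, I would lay out $P$ so that for each fixed $(i,j)$ the slice $P[i,j,\cdot]$ occupies a contiguous block of length $n$. Then the product-computation phase can be scheduled so that the $n^3$ writes to $P$ are performed in an order that streams through memory in cache-line chunks, giving $O(n^3/B)$ misses for the writes; the reads from $A$ and $B$ inside this phase can be reused $n$ times per loaded line under the ideal cache model, so they are absorbed into the same $O(n^3/B)$ bound. Each reduction tree streams a contiguous length-$n$ segment of $P$ and incurs $O(n/B + \log n)$ misses; summed over the $n^2$ trees (and using the tall-cache assumption to absorb the $\log n$ term), the reduction phase also incurs $O(n^3/B)$ misses.

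The only non-routine point is making sure the scheduling choices that yield the $O(\log n)$ span do not destroy the $O(n^3/B)$ serial cache bound. I would handle this by describing the serialized traversal explicitly, namely iterate the outer two loops in $(i,j)$ row-major order and the inner $k$-loop contiguously, so that the serial execution of the DAG streams through $A$, $B$, $P$, and $C$ in a cache-friendly manner; the parallel span analysis is unaffected because it depends only on the DAG, not on the serialization order used for cache accounting.
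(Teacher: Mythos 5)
Your proposal is correct in substance but takes a genuinely different route from the paper's. The paper proves the lemma via the standard $2$-way recursive divide-and-conquer that spawns eight concurrent half-size sub-multiplications into an $n \times n$ temporary per level and merges by elementwise addition; the bounds follow from the recurrences $T_\infty(n)=T_\infty(n/2)+O(1)$, $T_1(n)=8\,T_1(n/2)+O(n^2)$, $S(n)=8\,S(n/2)+O(n^2)$, with the $O(n^3/B)$ cache bound dominated by the cold misses on the $O(n^3)$ of temporary space. You instead flatten the computation: materialize all $n^3$ pairwise products into a $3$-dimensional array $P$ and reduce each fiber by a balanced binary tree. Your work, span, and space accounting is immediate and arguably more transparent than the paper's, and it makes explicit why $O(n^3)$ space accompanies the $O(\log n)$ span. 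The one soft spot is the cache bound for the product phase: with the second input matrix stored row-major and your proposed $(i,j)$ row-major, inner-$k$ serialization, the accesses $B[0,j], B[1,j], \dots, B[n-1,j]$ touch $n$ distinct cache lines, and that column is not reused until roughly $n^2$ accesses later, so these reads cost $\Theta(n^3)$ misses rather than being absorbed into $O(n^3/B)$ unless $M=\Omega(nB)$; the claim that lines of $A$ and $B$ are each reused ``$n$ times per loaded line'' fails for the column accesses. This is easily repaired --- transpose the second matrix first (an $O(n^2/B)$-miss preprocessing step that is dominated), store it column-major, or serialize the product phase in a recursively blocked order --- but it should be stated. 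The paper's recursive decomposition sidesteps the issue entirely, since the cache-oblivious recursion localizes all three operands automatically; what your flat construction buys in return is a completely elementary argument that needs no recurrence solving for work, span, or space.
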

%
\begin{proof}
The lemma can be proved by constructing a $2$-way
divide-and-conquer algorithm that recursively 
divides a dimension-$n$ ($n$-by-$n$-by-$n$)
matrix multiplication (MM) into eight ($8$) concurrent 
dimension-$(n/2)$ sub-MMs by allocating temporary matrix
of size $n$-by-$n$ to hold intermediate results and 
merge the results by addition in the end of each recursion.
\punt{
The divide-and-conquer pattern is shown in \eqref{mm-dac}.
\begin{align}
    \begin{bmatrix}
        C_{00} & C_{01} \\
        C_{10} & C_{11}
    \end{bmatrix}
    &= 
    \begin{bmatrix}
        A_{00} & A_{01} \\
        A_{10} & A_{11}
    \end{bmatrix} 
    \otimes
    \begin{bmatrix}
        B_{00} & B_{01} \\
        B_{10} & B_{11}
    \end{bmatrix} \\
    &=
    \begin{bmatrix}
        A_{00} \otimes B_{00} & A_{00} \otimes B_{01} \\
        A_{10} \otimes B_{00} & A_{10} \otimes B_{01}
    \end{bmatrix} \\
    & \qquad \oplus 
    \begin{bmatrix}
        A_{01} \otimes B_{10} & A_{01} \otimes B_{11} \\
        A_{11} \otimes B_{10} & A_{11} \otimes B_{11}
    \end{bmatrix}
\label{eq:mm-dac}
\end{align}
}
\end{proof}

\begin{lemma} [Lemma 3 of \cite{GalilPa94}]
Referring to \figref{1D-square-decomp}, a square $Q^*$ of 
size $v$, i.e. of size $v$-by-$v$, can be computed from 
two adjacent triangles $T_1$, $T_2$, and $Q$ by 
$Q^* = T_1 Q T_2$ in $O(\log v)$ time, $O(v^3)$ space 
and work, and $O(v^3/B)$ cache bounds.
\label{lem:1D-square-decomp}
\end{lemma}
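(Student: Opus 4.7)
The plan is to realize $T_1 Q T_2$ as two successive general matrix multiplications on the closed semiring and then to invoke \lemref{mm-n3-space} twice.

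First, I would regard $T_1$, $Q$, and $T_2$ as $v \times v$ matrices over the semiring $(R, \min, +, +\infty, 0)$, padding the two triangular matrices with $+\infty$ wherever their triangular shape leaves entries undefined. Semantically, following the construction recapped in \secref{old-sublinear-1D}, $T_1(i,k)$ is the optimal intra-$T_1$ cost from row-coordinate $i$ to the $T_1$--$Q$ boundary coordinate $k$; $Q(k,\ell)$ is the current crossing cost from the $T_1$-boundary to the $T_2$-boundary; and $T_2(\ell,j)$ is the optimal intra-$T_2$ cost from a $T_2$-boundary coordinate $\ell$ to the column-coordinate $j$. Any shortest path realizing $Q^*(i,j)$ must leave $T_1$ through some boundary vertex $k$, cross $Q$ to some $T_2$-boundary vertex $\ell$, and then traverse $T_2$ to reach $j$; by min-plus associativity, taking the aggregate minimum over all such $(k,\ell)$ is exactly the matrix product $T_1 Q T_2$.

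Next, I would carry out the product in two stages: form $P := T_1 Q$, then $Q^* := P \, T_2$. Each stage is a single $v \times v$ general matrix multiplication on the closed semiring, so by \lemref{mm-n3-space} each stage takes $O(\log v)$ time, $O(v^3)$ work, $O(v^3)$ space, and incurs $O(v^3/B)$ cache misses. Composing the two stages sequentially yields total time $O(\log v) + O(\log v) = O(\log v)$, total work $O(v^3)$, cache $O(v^3/B)$, and space $O(v^3)$ (the scratch allocated for $P$ may be released or reused once the second product has consumed it).

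The main obstacle is the correctness step: one must verify that the tripartite split $T_1 \to Q \to T_2$ truly accounts for \emph{every} candidate shortest path across the combined region rather than merely an upper bound. This reduces to the inductive assumption that $T_1$ and $T_2$ already hold \emph{closed} shortest-path matrices for their own triangular regions, together with the geometric observation that any path crossing the square must cross the $T_1$--$Q$ boundary and the $Q$--$T_2$ boundary, each in a single well-defined way. Once this correctness is established, the stated bounds follow directly from two applications of \lemref{mm-n3-space}.
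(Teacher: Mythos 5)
Your proposal matches the paper's proof in both structure and substance: the paper likewise interprets $Q^* = T_1 Q T_2$ as minimizing over tripartite paths $[(i,p),(p,q),(q,j)]$ with $(i,p)$ a shortest path in $T_1$, $(p,q)$ an edge in $Q$, and $(q,j)$ a shortest path in $T_2$, and then charges the cost to two applications of \lemref{mm-n3-space}, exactly as you do. Your added remarks on $+\infty$-padding and on why the decomposition is exhaustive are harmless elaborations of the same argument, not a different route.
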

\begin{proof}
Assuming that the indices for $Q^*$, $T_1$, and $T_2$ are 
from $1$ to $m$, i.e.
\begin{align}
    T_1 &= H^* (i, j) & & \text{for $1 \leq i < j \leq m$} \nonumber \\
    T_2 &= H^* (i, j) & & \text{for $m/2 < i < j \leq m$} \nonumber\\
    Q & = H (i, j)    & & \text{for $i \leq m/2 < j$} \nonumber\\
    Q^* &= H^* (i, j) = T_1 Q T_2 & & \text{for $i \leq m/2 < j$} \label{eq:1D-H-star}
\end{align}
The semantics of $Q^*$'s computation stands for computing the 
shortest path from $i$ to $j$ by taking the minimum of all 
possible paths $[(i, p), (p, q), (q, j)]$, where $(i, p)$ is a 
cell in $T_1$ representing the shortest path from 
$i$ to $p$, $(p, q)$ is a cell in $Q$ standing for an edge 
from $p$ to $q$, and $(q, j)$ is a cell in $T_2$ representing
the shortest path from $q$ to $j$.
Since the computation of $T_1 Q T_2$ requires two squaring 
operations, each of which costs $O(\log v)$ time and $O(v^3)$ 
work and space, and $O(v^3/B)$ cache according to 
\lemref{mm-n3-space}, the lemma then follows.
\end{proof}

\begin{theorem}
[Refinement of Theorem 2 of \cite{GalilPa94}] 
There is an algorithm that solves the 1D recurrences of 
\eqref{1D} in a sublinear 
$O(\sqrt{n} \log n)$ time, optimal $O(n^2)$ work, $O(n^2)$ space, 
and $O(n^2/B)$ cache bounds.
\label{thm:old-sublinear-1D}
\end{theorem}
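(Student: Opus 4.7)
The plan is to transcribe Galil and Park's construction from Section 2 of \cite{GalilPa94} into our notation and to verify the claimed bounds using \lemref{mm-n3-space} and \lemref{1D-square-decomp}. I would first argue equivalence: by induction on $j$, the shortest-path value $H^*(0, j)$ in the graph encoded by \eqref{eq:1D-H} equals $D[j]$, so it suffices to compute the top row of $H^*$.

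Next, partition $[0, n]$ into $\sqrt{n}$ consecutive index blocks of size $\sqrt{n}$, producing the diagonal-triangle layout depicted in \figref{sublinear-1D}. Sweep the $\sqrt{n}$ triangles in order, maintaining the invariant that after step $i$ the submatrix $\{H(a, b) : a \le b \le i \sqrt{n}\}$ has been finalized to $H^*(a, b)$. Step $i$ consists of two phases: phase (A) closes the $i$-th diagonal triangle by repeated squaring on its $\sqrt{n} \times \sqrt{n}$ submatrix, fixing the top-row entries inside block $i$; phase (B), run in parallel over all off-diagonal squares that sit in already-finalized rows and newly-available columns, applies the decomposition $Q^* = T_1\,Q\,T_2$ of \lemref{1D-square-decomp} to propagate shortest paths through block $i$ to every later block.

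For the resource accounting I would proceed as follows. Phase (A) performs $O(\log n)$ squarings, each costing $O(\log n)$ time and $O(n^{3/2})$ work by \lemref{mm-n3-space}. Phase (B)'s parallel invocations of \lemref{1D-square-decomp} each cost $O(\log n)$ time and run simultaneously within a step. Summing the span of the two phases over the $\sqrt{n}$ sequential sweep steps yields $T_\infty = O(\sqrt{n}\,\log n)$. Storing $H$ gives $O(n^2)$ space, and since each entry is touched a bounded number of times across the sweep, $Q_1 = O(n^2/B)$ follows immediately.

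The principal obstacle is the work bound: a direct count of phase (B)'s square updates yields $\Theta(n^{5/2})$ total work, which exceeds the target $O(n^2)$. The remedy, due to Galil and Park, is their \emph{method of indirection}: rather than materializing the $\Theta(n)$ entries of every off-diagonal square, the algorithm keeps only $O(\sqrt{n})$ summary values per block, so each step performs $O(n)$ rather than $O(n^{3/2})$ work, after which the totals telescope to $O(n^2)$. Carrying out this amortization carefully --- and confirming that the indirection does not inflate the span beyond $O(\log n)$ per step --- is where I expect the main technical difficulty to lie.
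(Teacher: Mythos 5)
Your skeleton --- diagonal triangles of side $\sqrt{n}$ plus a left-to-right sweep --- matches the paper, but the step you defer as ``the main technical difficulty'' is precisely the step the paper resolves, and the mechanism you propose for it is not the right one. The paper never maintains your invariant that the whole prefix submatrix of $H^*$ is finalized, and it never touches the off-diagonal squares at all in the sweep. Instead it observes that only the top row $f(j) = H^*(0,j)$ is needed, and computes the $l$-th interval of $f$ as a product of three matrices: the $1\times(l-1)v$ row vector of already-computed $f$-values, the $(l-1)v\times v$ block of $H$ (generated on the fly in $O(1)$ time per entry), and the $v\times v$ precomputed diagonal triangle $H^*_\triangle$. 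The dominant product costs $O(lv^2)$ work and $O(\log(lv))$ time, and $\sum_{l} O(lv^2) = O((n/v)^2 v^2) = O(n^2)$ with $O((n/v)\log n)$ total span. No per-block ``summary values'' or amortization argument is needed; your proposed $O(\sqrt{n})$ summaries with ``$O(n)$ work per step'' is also internally inconsistent, since $\sqrt{n}$ steps at $O(n)$ work each totals $O(n^{3/2})$, not the $O(n^2)$ you claim it telescopes to. The actual per-step work grows with $l$ up to $O(n^{3/2})$ and only the sum is $O(n^2)$.

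Two smaller accounting errors would also break your stated bounds even if the propagation were fixed. First, you close the $i$-th triangle inside the sweep with $O(\log n)$ full squarings each of span $O(\log n)$, which charges $O(\log^2 n)$ span per step and hence $O(\sqrt{n}\log^2 n)$ overall; the paper computes all $n/v$ triangles \emph{simultaneously before} the sweep, so the triangle phase contributes only $O(\log^2 v)$ to the total span. Second, $O(\log n)$ repeated squarings of an $n^{1/2}\times n^{1/2}$ matrix cost $O(n^{3/2}\log n)$ work per triangle and $O(n^2\log n)$ in total, which is not work-efficient; the paper instead uses the recursive square decomposition of \lemref{1D-square-decomp}, in which the work per level is geometric and the top level dominates, giving $O(v^3) = O(n^{3/2})$ work per triangle and $O(n^2)$ in total.
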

\begin{proof}
Referring to \figref{sublinear-1D}, the algorithm can be
formulated as follows:
\begin{enumerate}
    \item Dividing the top row of $H^*$ into $n/v$ intervals, 
        where $v$ is a parameter to determine later. 

    \item Computing $n/v$ shaded triangles of $H^*$ on 
        diagonal bounded by cells $(i, i)$, $(i, i + v)$, 
        and $(i + v, i + v)$ for $0 \leq i \leq n/v - 1$, 
        simultaneously by repeated squaring as the algorithm in
        \lemref{1D-square-decomp}.
        Observing that each top-level triangle
        is computed recursively in $\log v$ levels
        and that top-level computation dominates, 
        the overheads to compute one top-level triangle then
        sum up to
        $O(\log^2 v)$ time, $O(v^3)$ work and space, and 
        $O(v^3/B)$ cache misses. 
        Summing up the overheads for $n/v$ concurrent top-level
        triangles yield the costs of $O(\log^2 v)$ time,
        $O(nv^2)$ work and space, and $O(nv^2/B)$ cache misses.

    \item Computing the top shaded row of $H^*$ from the left-most 
        interval to right-most in $n/v$ iterations.
        For convenience, let's denote the top shaded row of 
        $H^*$ by $f$.
        The first interval, i.e. $f(0), \ldots, f(v)$, is given 
        by the top row of first shaded triangle for free; 
        For $l \in [2, n/v)$, we compute the $l$-th interval
        , i.e.  $f((l-1)v+1), \ldots, f(lv)$, by the 
        squaring of following three matrices, 
        \begin{enumerate}
            \item all previous intervals, i.e. $f(0), \ldots, 
                f((l-1)v)$, which is a $1$-by-$(l-1)v$ matrix.
                We call it ``$H^*_{-}$'' for convenience.
            \item the $l$-th interval of $H$, i.e. the rectangular 
                region of $H$ bounded by cells 
                $(0, (l-1)v + 1)$, $((l-1)v + 1, (l-1)v + 1)$,
                $((l-1)v + 1, lv)$, and $(0, lv)$, which is an 
                $(l-1)v$-by-$v$ matrix. 
                We call it ``$H_{\Box}$''.
                Note that matrix $H$ can be computed on-the-fly in 
                $O(1)$ time with no memory access.
            \item the shaded triangle on the same column, i.e.
                the $H^*$ region bounded by cells 
                $((l-1)v, (l-1)v)$, $((l-1)v, lv)$, and 
                $(lv, lv)$, which is a $v$-by-$v$ triangular 
                matrix. We call it ``$H^*_{\triangle}$''.
        \end{enumerate}
        The first squaring of $H^*_{-}$ with $H_{\Box}$ yields 
        an intermediate $1$-by-$v$ matrix $H'_{-}$ 
        in $O(\log (lv))$ time, $O(lv^2)$ space and work, and 
        $O(lv^2/B)$ cache misses; The second squaring of the
        intermediate $H'_{-}$ with $H^*_{\triangle}$ 
        yields the final $1$-by-$v$ $l$-th interval
        in $O(\log v)$ time, $O(v^2)$ space and work, 
        and $O(v^2/B)$ cache misses.
        \punt{
        Summing up the overheads of the two squarings yields 
        a cost of $O(\log (lv))$ time, $O(lv^2)$ space and 
        work, and $O(lv^2/B)$ cache misses. 
        }
        Apparently, the first squaring dominates.

        Summing up over $l \in [2, n/v]$ iterations yields a 
        cost of
        $O((n/v) \log n)$ time, $O(nv)$ space, with temporary 
        space for squaring of \lemref{mm-n3-space} reused 
        across iterations, $O(n^2)$ work, and 
        $O(n^2/B)$ cache misses.
\end{enumerate}

Summing up the overheads of the above two steps and making 
$v = \sqrt{n}$ yields the conclusion.
\end{proof}

Though this algorithm is work-efficient, it's neither 
space- nor cache-efficient since a straightforward 
cache-oblivious algorithm requires only $O(n)$ space and incurs 
$O(n/B + n^2/(BM))$ cache misses.

\subsubsecput{co-1D}{Cache-Oblivious Parallel 1D Algorithm}
Referring to \figref{co-1D}, a straightforward cache-oblivious
parallel (COP)
algorithm recursively divides the work into three 
or four quadrants depending on the shape and schedules their
executing order according to the data dependencies in the 
granularity of \emph{quadrants}.
We denote the top-left quadrant by $(0, 0)$, top-right 
$(0, 1)$, bottom-left $(1, 0)$, and bottom-right $(1, 1)$. 
Referring to the pseudo-code in \figref{cop-1D-tri},
the serialization between the computation of $A_{00}$ and 
$A_{01}$ is because the computation of $A_{01}$ requires
the results of $A_{00}$ as input. If an algorithm does not 
allocate temporary space, the computation of $A_{11}$ has to lay 
behind $A_{01}$ by our CREW assumption because they output to 
the same region.
A similar analysis applies to the pseudo-code in 
\figref{co-1D-box}. 

\begin{lemma}
There is a COP algorithm that solves
the 1D recurrences of \eqref{1D} in $O(n \log n)$ time, 
optimal $O(n^2)$ work, 
optimal $O(n)$ space, and optimal $O(n/B + n^2/(BM))$ cache bounds.
\label{lem:cop-1D}
\end{lemma}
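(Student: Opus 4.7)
The plan is to analyze the divide-and-conquer scheme of \figref{co-1D} together with its companion box procedure. A size-$n$ triangle (computing $D[l{+}1..r]$ for an interval of length $n = r-l$) splits into two size-$n/2$ diagonal sub-triangles $A_{00}$ and $A_{11}$ and one off-diagonal size-$n/2$ box $A_{01}$; by the data dependency ($A_{01}$ consumes the values produced by $A_{00}$) together with the CREW exclusive-write constraint emphasized in the excerpt ($A_{01}$ and $A_{11}$ both write into $D[m{+}1..r]$), the three pieces must be serialized as $A_{00} \serial A_{01} \serial A_{11}$. A size-$n$ box, which updates one interval of $D$ from a disjoint earlier interval, decomposes into four size-$n/2$ sub-boxes; horizontally adjacent pairs write to disjoint output columns and hence may run in parallel, whereas vertically stacked pairs conflict on their output columns, yielding exactly two serial waves of two concurrent sub-boxes.

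With this schedule in hand, I would first solve the recurrences for an $n \times n$ box. The two-wave structure gives $B_\infty(n) = 2B_\infty(n/2) + O(1) = O(n)$, and the four-way split gives $B_1(n) = 4B_1(n/2) + O(1) = O(n^2)$. For the cache bound, observe that a box's working set is only the $O(n)$ cells of $D$ it reads and writes, so when $n = \Theta(M)$ the box fits in cache and incurs $O(M/B)$ misses; unwinding $Q_B(n) = 4 Q_B(n/2)$ down to that base case yields $Q_B(n) = O(n^2/(BM))$. Substituting into the triangle recurrences,
\begin{align*}
T_\infty(n) &= 2T_\infty(n/2) + B_\infty(n/2) = O(n \log n), \\
T_1(n) &= 2T_1(n/2) + B_1(n/2) = O(n^2), \\
Q_T(n) &= 2Q_T(n/2) + Q_B(n/2) = O(n^2/(BM)),
\end{align*}
plus a one-time $O(n/B)$ charge to sweep $D$ into cache at the leaves. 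Since only the $D$ array itself is retained across recursive frames and $w(\cdot,\cdot)$ is evaluated on the fly, the space usage is $O(n)$.

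The subtle step, and the one I expect to demand the most care, is justifying why the four-way box split collapses into only two serial waves rather than a longer chain. This requires arguing that under CREW the horizontally adjacent pair $\{A_{00}, A_{01}\}$ may run concurrently because they write into disjoint $D$-column ranges while merely concurrent-reading the same input rows, and likewise for $\{A_{10}, A_{11}\}$; the only mandatory barrier is between the two waves, forced by the shared output columns of a vertical pair. Once this dependency analysis is pinned down, the four bounds above follow directly from the recurrences and match the $\Theta(n^2)$ serial work and $\Theta(n)$ serial space lower bounds, so the algorithm is simultaneously work-, space-, and cache-efficient.
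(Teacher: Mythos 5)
Your proposal is correct and takes essentially the same approach as the paper: the same triangle/box decomposition, the same CREW-forced serialization yielding the recurrences $T_\infty(n)=2T_\infty(n/2)+B_\infty(n/2)$ and $Q_B(n)=4Q_B(n/2)$ with an $O(n/B)$ base case once the subproblem fits in cache, and the same observation that the absence of temporary space gives the $O(n)$ space bound. The only minor difference is that you spell out the dependency analysis justifying the box's two-wave schedule, which the paper states only briefly in the prose preceding the lemma.
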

\begin{proof}
The COP algorithm is given in 
\figref{co-1D} and we have following recurrences for its 
time and cache complexity, respectively. 
The subscripts of $\tri$ and $\Box$ in the recurrences
stand for the $\proc{cop-1D}_\tri$ and $\proc{co-1D}_\Box$
algorithms in \figreftwo{cop-1D-tri}{co-1D-box} respectively.
\aeqref{cop-1D-cache-stop} is the stop condition of cache bound. 
The recurrences solve to $T_{\infty, \tri} (n) = O(n \log n)$ and 
$Q_{1, \tri} (n) = O(n/B + n^2/(BM))$, with an optimal $O(n)$ 
space bound because the algorithm does not use temporary space.
\begin{align}
    T_{\infty, \tri}(n) &= 2 T_{\infty, \tri} (n/2) + T_{\infty, \Box} (n/2) \\
    T_{\infty, \Box} (n) &= 2 T_{\infty, \Box} (n/2) \label{eq:cop-1D-box} \\
    Q_{1, \tri} (n) &= 2 Q_{1, \tri} (n/2) + Q_{1, \Box} (n/2) \\
    Q_{1, \Box} (n) &= 4 Q_{1, \Box} (n/2) \\
    Q_{1, \tri} (n) &= Q_{1, \Box} (n) = O(n/B) \quad \text{if $n \leq \epsilon_6 M$} \label{eq:cop-1D-cache-stop}
\end{align}
\end{proof}

\begin{figure}[!ht]
\begin{subfigure}[b]{0.48 \linewidth}
\begin{codebox}
\Procname{$\proc{cop-1D}_{\tri}(A)$}
\li $\proc{cop-1D}_{\tri}(A_{00}) \serial$
\li $\proc{co-1D}_{\Box}(A_{01}, A_{00}) \serial$
\li $\proc{cop-1D}_{\tri}(A_{11}) \serial$
\li \Return
\end{codebox}
\vspace*{-1 \baselineskip}
\caption{The COP algorithm for a triangular region}
\label{fig:cop-1D-tri}
\end{subfigure}
\hfil
\begin{subfigure}[b]{0.48 \linewidth}
\begin{align*}
    \oone{+} \tfire{\tri\Box} &\oone{-} = \{ \\
         & \oone{+}_{00} \tfire{\tri\Box} \{\oone{-}_{00}, \oone{-}_{01}\} \\
         & , \oone{+}_{11} \tfire{\tri\Box} \{\oone{-}_{10}, \oone{-}_{11}\}\}
\end{align*}
\vspace*{-1 \baselineskip}
\caption{The fire rule of ND algorithm}
\label{fig:nd-1D-fire-rule}
\end{subfigure}
\vfil
\begin{subfigure}[b]{0.48 \linewidth}
\begin{codebox}
\Procname{$\proc{co-1D}_{\Box}(A, B)$}
\li $\proc{co-1D}_{\Box}(A_{l}, B_{l})$
\zi \quad $\parallel \proc{co-1D}_{\Box}(A_{r}, B_{r}) \serial$
\li $\proc{co-1D}_{\Box}(A_{l}, B_{r})$
\zi \quad $\parallel \proc{co-1D}_{\Box}(A_{r}, B_{l}) \serial$
\li \Return
\end{codebox}
\vspace*{-1 \baselineskip}
\caption{The COP algorithm for a rectangular region}
\label{fig:co-1D-box}
\end{subfigure}
\hfil
\begin{subfigure}[b]{0.48 \linewidth}
\begin{codebox}
\Procname{$\proc{nd-1D}_{\tri}(A)$}
\li $\proc{nd-1D}_{\tri}(A_{00}) \tfire{\tri\Box}$
\zi \quad $\proc{co-1D}_{\Box}(A_{01}, A_{00}) \serial$
\li $\proc{nd-1D}_{\tri}(A_{11}) \serial$
\li \Return
\end{codebox}
\vspace*{-1 \baselineskip}
\caption{The ND algorithm for a triangular region}
\label{fig:nd-1D-tri}
\end{subfigure}
\caption{Pseudo-codes of cache-oblivious 1D algorithms}
\label{fig:co-1D-code}
\end{figure}

\subsubsecput{nd-1D}{Nested Dataflow 1D algorithm}
We can improve the time bound of above COP algorithm
by refining the data dependency of $\proc{cop-1D}_{\tri}$ 
algorithm recursively by the ND method as shown in 
\figref{nd-1D-tri}. 
The ``$\tfire{\tri\Box}$'' construct in figure indicates a 
partial dependency between the source and sink tasks, which
is specified by the fire rule in \figref{nd-1D-fire-rule}. 
The $\oplus$ and $\ominus$ notation are wildcards
to match at runtime source and sink tasks respectively. 
The fire rule says that the $(0, 0)$, $(0, 1)$ subtasks
nested in the same sink task only partially depends on the 
$(0, 0)$ subtask of the source, respectively the $(1, 0)$ 
and $(1, 1)$ subtasks of the same sink partially depends on the 
$(1, 1)$ subtask of the source. The partial dependences will 
then be 
recursively refined by the same rule until base cases where the 
``$\tfire{\tri\Box}$'' construct will reduce to a ``$\serial$'' 
construct.

\begin{theorem}
    There is an ND algorithm that solves the 1D recurrences of 
    \eqref{1D}
    in $O(n)$ time, optimal $O(n^2)$ work, $O(n)$ space and 
    $O(n^2/(BM))$ cache bounds.
\label{thm:nd-1D}
\end{theorem}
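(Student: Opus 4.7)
The plan is to show that the ND algorithm $\proc{nd-1D}_\tri$ in \figref{nd-1D-tri} inherits all of the work, space, and cache bounds of the COP algorithm in \lemref{cop-1D}, while sharpening the span from $O(n\log n)$ to $O(n)$. The argument splits into two parts: transferring the resource bounds, and then proving the new span.

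For work, space, and cache: the ND algorithm executes exactly the same recursive decomposition and exactly the same arithmetic operations as $\proc{cop-1D}_\tri$ in \figref{cop-1D-tri}, differing only in that the $\serial$ edge between $A_{00}$ and $A_{01}$ is relaxed to a $\tfire{\tri\Box}$ partial dependency. Since no temporary buffers are allocated, the space remains $O(n)$; since the depth-first recursive executing order is preserved by $\tfire{\tri\Box}$ (see \secref{model}), the serial cache analysis of \lemref{cop-1D} carries over, giving $Q_1(n) = O(n/B + n^2/(BM))$; and the total work counts all vertices of the DAG, which is invariant under the replacement of $\serial$ by $\tfire{\tri\Box}$, so $T_1(n) = O(n^2)$.

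For the span, I would unfold the fire rule in \figref{nd-1D-fire-rule} to expose the pipeline. Let $S(n)$ denote the span of $\proc{nd-1D}_\tri$ and let $C(n)$ denote the span of the composite $\proc{nd-1D}_\tri(A_{00}) \tfire{\tri\Box} \proc{co-1D}_\Box(A_{01}, A_{00})$ at size $n$. Since $A_{11}$ is still serialized after the composite, we immediately have $S(n) = C(n/2) + S(n/2) + O(1)$. The intuition for a linear span is that the fire rule lets the sink's top (resp.\ bottom) sub-rectangles start as soon as the source's top-left (resp.\ bottom-right) sub-triangle finishes, so that the source and sink---each of individual span $\Theta(n)$---race through their recursions and the composite completes in $C(n) = O(n)$ rather than the sum of the two spans. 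Granting $C(n/2) = O(n/2)$, the recurrence solves to $S(n) = S(n/2) + O(n/2) = O(n)$.

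The main obstacle is the rigorous pipelining argument for $C(n)$: I must verify that the internal $\serial$ dependencies of $\proc{co-1D}_\Box$ (line 2 of \figref{co-1D-box}) never force the sink rectangle to block on an as-yet-unfinished portion of the source triangle beyond what the fire rule already accounts for, and that recursive refinement of $\tfire{\tri\Box}$ preserves this non-blocking property all the way down to leaf tasks. Once $C(n) = O(n)$ is established, the resource transfers above combine with the new span bound to yield the theorem.
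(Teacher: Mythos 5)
Your proposal follows essentially the same route as the paper: the work, space, and cache bounds are transferred verbatim from \lemref{cop-1D} on the grounds that the ND algorithm only reschedules subtasks, and the span is obtained from $S(n) = C(n/2) + S(n/2)$ together with $C(n) = O(n)$, which are exactly the paper's recurrences $T_{\infty, \tri}(n) = T_{\infty, \tfire{\tri\Box}}(n/2) + T_{\infty, \tri}(n/2)$ and $T_{\infty, \tfire{\tri\Box}}(n) = 2\, T_{\infty, \tfire{\tri\Box}}(n/2)$. The one step you flag as the remaining obstacle---rigorously establishing $C(n) = O(n)$---is precisely what the paper supplies by reading $C(n) = 2\,C(n/2)$ with $C(1) = O(1)$ directly off the fire rule of \figref{nd-1D-fire-rule}: the composite splits into the $(0,0)$-sourced half followed serially by the $(1,1)$-sourced half, each itself a half-size composite, so no sink subtask ever waits on more of the source than the rule records.
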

\begin{proof}
Referring to \figref{nd-1D-tri}, the ND algorithm just 
re-schedules subtasks to run as soon as their input data are 
ready so that it does not use more space or incur more
cache misses than the classic COP counterpart in 
\figref{cop-1D-tri}.
Its new time recurrences are as follows, which solve 
to $O(n)$.
\begin{align}
T_{\infty, \tri}(n) &= T_{\infty, \tfire{\tri\Box}} (n/2) + T_{\infty, \tri} (n/2) \\
T_{\infty, \tfire{\tri\Box}} (n) &= 2 T_{\infty, \tfire{\tri\Box}} (n/2) \label{eq:nd-1D-tri}\\
T_{\infty, \tfire{\tri\Box}} (1) &= T_{\infty, \tri} (1) + T_{\infty, \Box} (1) \label{eq:nd-1D-fire-stop} 
\end{align}
\end{proof}

\begin{theorem}
    There is an ND algorithm that solves the 1D recurrences of 
    \eqref{1D}
    in $O(\sqrt{n} \log n)$ time, optimal $O(n^2)$ work, $O(n^2)$ 
    space and $O(n^2/B)$ cache bounds.
\label{thm:sub-nd-1D}
\end{theorem}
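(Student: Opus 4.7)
The plan is to follow the two-phase blueprint of Galil and Park's algorithm in \thmref{old-sublinear-1D} (divide the diagonal of $H$ into $n/v$ triangles of side $v$, compute them by repeated squaring, then sweep the top row via two squarings per interval), but to replace the sequential ``phase 1 then phase 2, iteration $l$ then iteration $l+1$'' schedule with an ND schedule whose $\fire$ rules expose precisely the true data dependencies. Setting $v=\sqrt{n}$ at the end will give the claimed bounds.

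First I would set up the structural recursion. For the $n/v$ diagonal sub-triangles, compute all of them concurrently; by \lemref{1D-square-decomp} each takes $O(\log^2 v)$ time, $O(v^{3})$ work and space and $O(v^{3}/B)$ cache, so the whole phase costs $O(\log^2 v)$ time, $O(n v^{2})$ work and space, and $O(n v^{2}/B)$ cache when $n/v$ triangles are run in parallel. For the top row, I would package each interval update as two matrix multiplications, the first multiplying the concatenation $H^{*}_{-}$ of all previously computed intervals by the on-the-fly $H_{\Box}$ rectangle, and the second multiplying the resulting $1\times v$ matrix by the diagonal triangle $H^{*}_{\triangle}$, exactly as in \thmref{old-sublinear-1D}.

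Next, I would introduce the ND $\fire$ rules that eliminate artificial control dependencies between these pieces. Rule (a): the diagonal triangle on column $l$ partially fires interval $l$ as soon as $H^{*}_{\triangle}$ is available, independently of triangles on other columns. Rule (b): interval $l$'s first squaring partially fires interval $l+1$'s first squaring, so that once the $1 \times v$ block for $f((l-1)v{+}1), \ldots, f(lv)$ is emitted it is immediately concatenated onto $H^{*}_{-}$ and multiplied against the next $H_{\Box}$. Because the MM algorithm of \lemref{mm-n3-space} is itself a divide-and-conquer computation, I would recursively refine these $\fire$ edges down to the leaf sub-MMs so that the sweep becomes a pipeline: when interval $l+1$'s squaring wants a particular sub-block of $H^{*}_{-}$, it waits only for that sub-block and not for the full interval $l$ to finish.

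Then I would argue the bounds. Work, space and cache are unchanged from \thmref{old-sublinear-1D}: the $\fire$ operator only reorders subtasks and preserves the recursive execution order, so we keep $O(n^{2})$ work, $O(n v) = O(n^{2})$ space (temporary MM space reused across iterations), and $O(n^{2}/B)$ cache. For time, each of the $n/v$ intervals contributes only the span of one $1 \times O(n)$-by-$O(n)$ squaring on its critical path, namely $O(\log n)$; the pipelined chain of $n/v$ such spans dominates the triangle phase's $O(\log^{2} v)$, giving $T_{\infty} = O((n/v)\log n)$. Setting $v = \sqrt{n}$ yields $O(\sqrt{n}\log n)$.

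The main obstacle is the second step: I have to write the $\fire$ rules so that the pipeline is actually correct, i.e.\ so that at the leaf level no sub-MM reads a block of $H^{*}_{-}$ before it has been written, while still preserving the recursive schedule that underlies the cache analysis of \lemref{mm-n3-space}. Once those rules are in place, the recurrences for $T_{\infty}$, $Q_{1}$, $T_{1}$ and space reduce to those already solved above, and the theorem follows.
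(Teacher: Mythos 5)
Your construction does reach the stated bounds, but by a genuinely different route from the paper's, and the difference matters for what the theorem is later used for. You keep Galil and Park's flat two-phase structure (concurrent diagonal triangles, then a left-to-right row sweep) and layer $\fire$ rules on top to pipeline the sweep. Since each of the $n/v$ sweep iterations already has span $O(\log n)$ and they form an inherent dependency chain, the pipelining leaves the time at $O((n/v)\log n)$ --- exactly what the unpipelined schedule of \thmref{old-sublinear-1D} already gives --- so your ND machinery is cosmetic here, and the ``main obstacle'' you flag (correct leaf-level fire rules for the pipeline) can simply be dropped without affecting any bound. The paper instead takes the ND recursion of \thmref{nd-1D} with the fire rule of \figref{nd-1D-fire-rule}, stops it early at a cutoff size $v$, and plugs in two sublinear-span kernels there: Galil--Park's closure algorithm for size-$v$ triangles ($O(\sqrt{v}\log v)$ span) and a new temporary-space rectangle kernel $\proc{sub-1D}_{\Box}$ (\figref{sub-1D-box}) with $O(\log v)$ span, giving $T_{\infty,\tfire{\tri\Box}}(v)=O(\sqrt{v}\log v)$ and hence $O((n/\sqrt{v})\log v)$ via \eqref{nd-1D-tri}. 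For 1D both routes land on the same bounds, but only the paper's hybrid transfers to GAP, where the closure kernel costs $O(v^4)$ work on a size-$v$ region versus the optimal $O(v^3)$: one cannot run Galil--Park's algorithm globally as you do, and the cutoff must come from the work-balance equation \eqref{gap-sub-nd-work}. Your route, applied to GAP, would reproduce their non-work-efficient $O(n^4)$ algorithm. Minor slip: with $v=\sqrt{n}$ the sweep's reused temporary space is $O(nv)=O(n^{3/2})$, not $O(n^2)$; the $O(n^2)$ space comes from the $O(nv^2)$ triangle phase.
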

\begin{proof}
We construct the algorithm based on following observations.
\begin{enumerate}
    \item The $\tfire{\tri\Box}$ construct in \eqref{nd-1D-tri} 
        will invoke some kernel functions to compute triangular 
        and rectangular shape regions respectively when the
        recursion goes down to base cases. 
        
    \item Galil and Park's algorithm in \thmref{old-sublinear-1D} 
        computes a triangular shape region in sublinear time
        and optimal work. 
        
    \item By using extra temporary space, i.e. like the general 
        MM algorithm in \lemref{mm-n3-space}, we can have a 
        sublinear-time and optimal-work algorithm to compute 
        rectangular shape regions as well. 

    \item Combining the two sublinear-time and optimal-work
        kernel functions for triangular and
        rectangular shape regions respectively, we can stop the 
        recursion of \eqref{nd-1D-tri} earlier.
\end{enumerate}

We allocate temporary space for the $\proc{co-1D}_{\Box}$ 
algorithm in \figref{co-1D-box} and have following 
$\proc{sub-1d}_{\Box}$ algorithm as shown in \figref{sub-1D-box}.
\begin{figure}[!ht]
\begin{codebox}
\Procname{$\proc{sub-1D}_{\Box}(A, B)$}
\li \Comment Update region $A$ from $B$
\li \Comment Allocate temporary space
\li $A' \assign \func{alloc}(\func{sizeof}(A))$ \label{li:sub-1D-alloc}
\li $\proc{sub-1D}_{\Box}(A_{l}, B_{l}) \parallel \proc{sub-1D}_{\Box}(A_{r}, B_{r})$ \label{li:sub-1D-parallel}
\li $\parallel \proc{sub-1D}_{\Box}(A'_{l}, B_{r}) \parallel \proc{sub-1D}_{\Box}(A'_{r}, B_{l}) \serial$ \label{li:sub-1D-sync}
\li \Comment Merge $A'$ to $A$ by addition
\li $A = A + A'$ \label{li:sub-1D-add}
\li \func{free}(\id{A}')
\li \Return
\end{codebox}
\vspace*{-1 \baselineskip}
\caption{A Sublinear-Time and Optimal-Work algorithm for the
rectangular shape region of 1D problem}
\label{fig:sub-1D-box}
\end{figure}

In \figref{sub-1D-box}, \lirefs{sub-1D-parallel}{sub-1D-sync} 
parallelizes the task of
updating $A$ from $B$ to four subtasks by allocating 
temporary space of $A'$, which is of the same size of $A$. 
The ``$\serial$'' construct on 
\liref{sub-1D-sync} is a synchronization operation so that
the overall task can not proceed until all four concurrent 
subtasks are done. 
Note that the merge by addition on \liref{sub-1D-add} can
be parallelized and accomplished in $O(1)$ time if counting
only data dependency. The recurrence
for $\proc{sub-1D}_{\Box}$'s time bound is then revised to 
\eqref{sub-1D-box}, which solves to $O(\log n)$.
\begin{align}
    T_{\infty, \Box}(n) &= T_{\infty, \Box} (n/2) + O(1) \label{eq:sub-1D-box}
\end{align}

Supplying this $\proc{sub-1D}_{\Box}$ for rectangular shape
regions and Galil and Park's final algorithm for triangular shape
regions to stop the recursion of \eqref{nd-1D-tri} earlier,
we have following stop condition of 
\eqref{sub-nd-1D-fire-stop} to replace
\eqref{nd-1D-fire-stop}, where $v$ is a parameter to be determined
later.
\begin{align}
    T_{\infty, \tfire{\tri\Box}} (v) &= T_{\infty, \tri} (v) + T_{\infty, \Box} (v) \label{eq:sub-nd-1D-fire-stop}\\
    T_{\infty, \tri}(v) &= O(\sqrt{v} \log v) \\
    T_{\infty, \Box}(v) &= O(\log v)
\end{align}
$T_{\infty, \tfire{\tri\Box}}(v)$ then solves to 
$O(\sqrt{v} \log v)$.
Supplying this result into \eqref{nd-1D-tri}, we have
$T_{\infty, \tri} (n) = O(n/v \cdot \sqrt{v} \log v) + 
T_{\infty, \tri} (n/2) = O(n/\sqrt{v} \log v)$. By making
$v = n$, the time bound solves to $O(\sqrt{n} \log n)$.
It's easy to verify that $\proc{sub-1D}_{\Box}$ has the same
space and cache bounds as Galil and Park's final algorithm 
for triangular regions (\thmref{old-sublinear-1D}). The
overall space and cache bounds then are a simple summation
over all regions.
\end{proof}

\paragrf{Discussions: }
Our result of \thmref{nd-1D} improves over prior
cache-oblivious and cache-efficient algorithm of \lemref{cop-1D}
on time bound without sacrificing work, space, and cache 
efficiency.
Our result of \thmref{sub-nd-1D} gives out another dimension
of tradeoff by reducing further time bound to be sublinear
but at the cost of increasing the space and cache bounds a bit
, actually still be asymptotically the same as that of Galil and 
Park's final algorithm (\thmref{old-sublinear-1D}).
Moreover, \thmref{sub-nd-1D} provides new insights into
solving DP recurrences with more than $O(1)$ dependency 
and will show its power in solving the GAP problem in 
\secref{sub-gap}. 

\subsecput{sub-gap}{The GAP Problem}
%
\paragrf{Organization: }
We firstly recap Galil and Park's final algorithm 
by \thmref{old-sublinear-gap}, which serves as the foundation 
for later discussion; 
Then we address the classic COP 
algorithm developed by Chowdhury and Ramachandran
\cite{ChowdhuryRa06, Chowdhury07}, as well as its ND 
improvement by \thmref{nd-gap};
Finally, we combine all components to yield the main
\thmref{sub-nd-gap}. 

\subsecput{old-sublinear-gap}{Sublinear-Time but Non-Work-Efficient 
GAP algorithm}
The key of Galil and Park's sublinear-time algorithm (Sect. 
4 of \cite{GalilPa94}) is similar to that of the 1D 
algorithm (\secref{old-sublinear-1D} of this paper).
That is, firstly, they reduce the GAP problem to a 
shortest path problem, then solve the shortest path
problem by the closure method. Finally, they reduce 
the amount of total work by indirection.

\newcommand{\len}{1}
\tikzset{
    bgap/.pic={
\coordinate (O) at (0, 0, 0);
\coordinate (A) at (\len, 0, 0);
\coordinate (B) at (\len, 0, \len);
\coordinate (C) at (0, 0, \len);
\coordinate (D) at (0, \len, 0);
\coordinate (E) at (\len, \len, 0);
\coordinate (F) at (\len, \len, \len);
\coordinate (G) at (0, \len, \len);
\coordinate (H) at (0, 2 * \len, 0);

\draw (C) -- (G) -- (B) -- cycle;
\draw (E) -- (A) -- (B) -- cycle;
\draw (G) -- (E) -- (B) -- cycle;
\draw (H) -- (G) -- (E) -- cycle;
\draw (H) -- (B);
\draw [dotted] (D) -- (O) -- (C);
\draw [dotted] (O) -- (A);
\draw [dotted] (H) -- (D) -- (G);
\draw [dotted] (D) -- (E);
}}
\tikzset{
    lgap/.pic={
\coordinate (O) at (0, 0, 0);
\coordinate (A) at (\len, 0, 0);
\coordinate (B) at (\len, 0, \len);
\coordinate (C) at (0, 0, \len);
\coordinate (D) at (0, \len, 0);
\coordinate (E) at (\len, \len, 0);
\coordinate (F) at (\len, \len, \len);
\coordinate (G) at (0, \len, \len);
\coordinate (H) at (0, 2 * \len, 0);

\draw (C) -- (G) -- (B) -- cycle;
\draw [dotted] (E) -- (A) -- (B) -- cycle;
\draw (G) -- (E) -- (B) -- cycle;
\draw (H) -- (G) -- (E) -- cycle;
\draw (H) -- (B);
\draw [dotted] (D) -- (O) -- (C);
\draw [dotted] (O) -- (A);
\draw [dotted] (H) -- (D) -- (G);
\draw [dotted] (D) -- (E);
}}
\tikzset{
    rgap/.pic={
\coordinate (O) at (0, 0, 0);
\coordinate (A) at (\len, 0, 0);
\coordinate (B) at (\len, 0, \len);
\coordinate (C) at (0, 0, \len);
\coordinate (D) at (0, \len, 0);
\coordinate (E) at (\len, \len, 0);
\coordinate (F) at (\len, \len, \len);
\coordinate (G) at (0, \len, \len);
\coordinate (H) at (0, 2 * \len, 0);

\draw (H) -- (G) -- (B) -- (E) -- cycle;
\draw (H) -- (B);
\draw (G) -- (E) -- (A);
\draw [dotted] (D) -- (E);
\draw [dotted] (O) -- (A);
}}
\tikzset{
    tgap/.pic={
\coordinate (O) at (0, 0, 0);
\coordinate (A) at (\len, 0, 0);
\coordinate (B) at (\len, 0, \len);
\coordinate (C) at (0, 0, \len);
\coordinate (D) at (0, \len, 0);
\coordinate (E) at (\len, \len, 0);
\coordinate (F) at (\len, \len, \len);
\coordinate (G) at (0, \len, \len);
\coordinate (H) at (0, 2 * \len, 0);

\draw (H) -- (G) -- (B) -- (E) -- cycle;
\draw (G) -- (E);
\draw (H) -- (B);
\draw [dotted] (O) -- (C) -- (G) -- (D) -- cycle;
\draw [dotted] (D) -- (E) -- (A) -- (O);
\draw [dotted] (H) -- (O);
\draw [dotted] (C) -- (B) -- (A);
}}
\tikzset{
    fbox/.pic={
\coordinate (O) at (0, 0, 0);
\coordinate (A) at (\len, 0, 0);
\coordinate (B) at (\len, 0, \len);
\coordinate (C) at (0, 0, \len);
\coordinate (D) at (0, \len, 0);
\coordinate (E) at (\len, \len, 0);
\coordinate (F) at (\len, \len, \len);
\coordinate (G) at (0, \len, \len);

\draw (C) -- (B) -- (F) -- (G) -- cycle;
\draw (F) -- (E) -- (A) -- (B);
\draw (G) -- (D) -- (E);
\draw [dotted] (D) -- (O) -- (C);
\draw [dotted] (O) -- (A);
}}
\tikzset{
    bbox/.pic={
\coordinate (O) at (0, 0, 0);
\coordinate (A) at (\len, 0, 0);
\coordinate (B) at (\len, 0, \len);
\coordinate (C) at (0, 0, \len);
\coordinate (D) at (0, \len, 0);
\coordinate (E) at (\len, \len, 0);
\coordinate (F) at (\len, \len, \len);
\coordinate (G) at (0, \len, \len);

\draw [dotted] (C) -- (B) -- (F) -- (G) -- cycle;
\draw [dotted] (F) -- (E) -- (A) -- (B);
\draw [dotted] (G) -- (D) -- (O) -- (C);
\draw [dotted] (D) -- (E);
\draw [dotted] (O) -- (A);
}}
\tikzset{
    lbox/.pic={
\coordinate (O) at (0, 0, 0);
\coordinate (A) at (\len, 0, 0);
\coordinate (B) at (\len, 0, \len);
\coordinate (C) at (0, 0, \len);
\coordinate (D) at (0, \len, 0);
\coordinate (E) at (\len, \len, 0);
\coordinate (F) at (\len, \len, \len);
\coordinate (G) at (0, \len, \len);

\draw (C) -- (B) -- (F) -- (G) -- cycle;
\draw [dotted] (F) -- (E) -- (A) -- (B);
\draw [dotted] (G) -- (D) -- (O) -- (C);
\draw [dotted] (D) -- (E);
\draw [dotted] (O) -- (A);
}}
\tikzset{
    rbox/.pic={
\coordinate (O) at (0, 0, 0);
\coordinate (A) at (\len, 0, 0);
\coordinate (B) at (\len, 0, \len);
\coordinate (C) at (0, 0, \len);
\coordinate (D) at (0, \len, 0);
\coordinate (E) at (\len, \len, 0);
\coordinate (F) at (\len, \len, \len);
\coordinate (G) at (0, \len, \len);

\draw (F) -- (E) -- (A) -- (B) -- cycle;
\draw [dotted] (O) -- (A);
}}

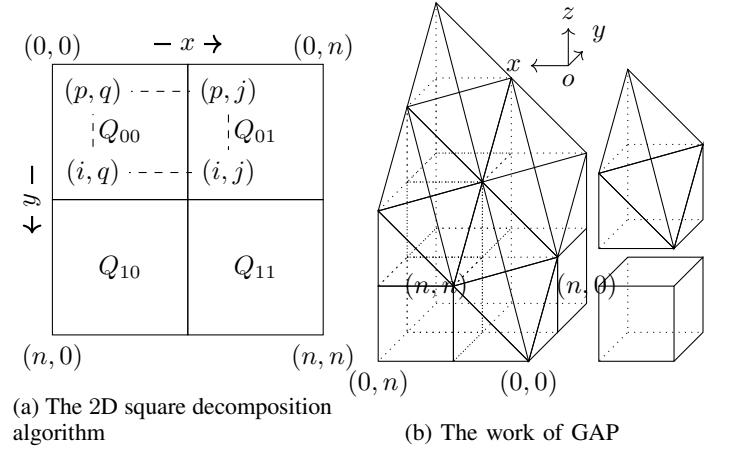
\begin{figure}[!ht]
\begin{subfigure}[b]{0.48 \linewidth}
\begin{center}
\begin{tikzpicture}[scale = 1.8]
\foreach \x in {0, 1} {
    \foreach \y in {0, 1} {
        \draw (\x, \y) rectangle (\x + 1, \y + 1);
    }
}
\draw [thick, ->] (0.75, 2.15) -- (1.25, 2.15) node[fill=white, pos=.5]{$x$}; 
\draw [thick, ->] (-0.15, 1.25) -- (-0.15, 0.75) node[fill=white, rotate=90, pos=.5]{$y$}; 
\node[above, yshift=-0.8mm] at (0, 2) {$(0, 0)$};
\node[above, yshift=-0.8mm] at (2, 2) {$(0, n)$};
\node[below] at (0, 0) {$(n, 0)$};
\node[below] at (2, 0) {$(n, n)$};

\coordinate (pq) at (0.3, 1.8);
\coordinate (pj) at (1.3, 1.8);
\coordinate (iq) at (0.3, 1.2);
\coordinate (ij) at (1.3, 1.2);

\draw [dashed] (pq) -- (pj) -- (ij);
\draw [dashed] (pq) -- (iq) -- (ij);

\path (pq) node [fill=white] {$(p, q)$} 
    -- (pj) node [fill=white] {$(p, j)$}
    -- (iq) node [fill=white] {$(i, q)$}
    -- (ij) node [fill=white] {$(i, j)$};
\node at (0.5, 0.5) {$Q_{10}$};
\node at (1.5, 0.5) {$Q_{11}$};
\node at (0.5, 1.5) {$Q_{00}$};
\node at (1.5, 1.5) {$Q_{01}$};
\end{tikzpicture} 
\caption{The $2$D square decomposition algorithm}
\label{fig:gap-square-decomp}
\end{center}
\end{subfigure}
\begin{subfigure}[b]{0.52 \linewidth}
\begin{center}
\begin{tikzpicture}[scale = 1]

\node[above, yshift=-0.8mm] at (0, 0, 0) {$(n, n)$};
\node[above, yshift=-0.8mm] at (2, 0, 0) {$(n, 0)$};
\node[below] at (0, 0, 2) {$(0, n)$};
\node[below] at (2, 0, 2) {$(0, 0)$};

\begin{scope}[xshift=50, yshift=90]
\draw[->] (0, 0, 0) -- (-0.5, 0, 0) node [anchor=east] {$x$};
\draw[->] (0, 0, 0) -- (0, 0, -0.5) node [anchor=south west]{$y$};
\draw[->] (0, 0, 0) -- (0, 0.5, 0) node [anchor=south]{$z$};
\node[below] at (0, 0, 0) {$o$};
\end{scope}

\begin{scope}[]
\path (0, 0, 0) pic {bbox} (0, 1, 0) pic {bbox} (0, 2, 0) pic {tgap};
\path (1, 0, 0) pic {rbox} (1, 1, 0) pic {rgap};
\path (0, 0, 1) pic {lbox} (0, 1, 1) pic {lgap};
\path (1, 0, 1) pic {bgap};
\end{scope}

\begin{scope}[xshift=55]
\path (1, 0, 1) pic {fbox} (1, 1.5, 1) pic {bgap};
\end{scope}

\end{tikzpicture} 
\caption{The work of GAP}
\label{fig:gap-work}
\end{center}
\end{subfigure}
\caption{The GAP Problem}
\label{fig:gap}
\end{figure}

The main steps as follows are essentially a $2$D
version of the closure method for the 1D problem in 
\secref{old-sublinear-1D}.
Firstly, it defines matrix $H$ as follows such that each
entry $H(p, q, i, j)$, where $p < i$ and $q < j$, is 
the current shortest path from cell $(p, q)$ to $(i, j)$ via 
one intermediate cell $(p, j)$ or $(i, q)$ as shown in 
\figref{gap-square-decomp}. 
$H(\cdot, \cdot, \cdot, \cdot)$ can
be viewed as a combination of two independent matrices
like the one defined by \eqref{1D-H} for the 1D problem, 
and is a $4$D array by itself.
\begin{align}
    H(i, j, i, j) &= 0 & \text{$0 \leq i, j \leq n$} \nonumber\\
    H(p, q, i, j) &= w(q, j) + w'(p, i) & \text{$p < i-1 \land q < j-1$} \label{eq:gap-H}\\
    H(p, q, i, j) &= +\infty & \text{for all others} \nonumber 
\end{align}
\begin{align*}
    & H(i-1, j-1, i, j) = \min\{s_{ij}, w(j-1, j) + w'(i-1, i)\} 
\end{align*}
It then defines a squaring operation on $H$ as follows.
\begin{align}
H^2(p, q, i, j) &= \min_{\begin{subarray}{c}p \leq s \leq i\\q \leq r \leq j\end{subarray}}\{ H(p, q, s, r) + H(s, r, i, j) \} \label{eq:gap-squaring}
\end{align}
\begin{lemma}
    The squaring operation of \eqref{gap-squaring} on two $4$D 
    matrices of dimension $n$ as defined by \eqref{gap-H} costs 
    $O(\log n)$ time, 
    $O(n^6)$ work and space with $O(n^6/B)$ cache misses.
\label{lem:gap-squaring}
\end{lemma}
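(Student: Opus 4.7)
The plan is to reduce the $4$D squaring in \eqref{gap-squaring} to a standard two-dimensional matrix multiplication on the closed semiring $(\mathbb{R}_{\geq 0} \cup \{+\infty\}, \min, +)$ and then invoke \lemref{mm-n3-space}. Concretely, I would flatten each of the two index pairs of $H$, viewing $H$ as an $N\times N$ matrix $\tilde H$ with $N = (n+1)^2 = \Theta(n^2)$, where the row is addressed by the tuple $(p,q)$ and the column by the tuple $(i,j)$. Under this relabeling, the expression for $H^2(p,q,i,j)$ becomes
\begin{align*}
\tilde H^2[(p,q),(i,j)] \;=\; \min_{(s,r)} \bigl\{\, \tilde H[(p,q),(s,r)] + \tilde H[(s,r),(i,j)] \,\bigr\},
\end{align*}
which is precisely the semiring matrix product of $\tilde H$ with itself.

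Next I would argue that the range restriction $p\le s\le i$, $q\le r\le j$ in \eqref{gap-squaring} need not be enforced explicitly: by the third case of the definition in \eqref{gap-H}, $H(p,q,s,r) = +\infty$ whenever $(s,r)$ lies outside the admissible rectangle, and similarly for $H(s,r,i,j)$. Thus every out-of-range term contributes $+\infty$ to the min and is automatically dominated. This confirms that the unrestricted semiring matrix product returns the same value as the restricted min in \eqref{gap-squaring}.

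Having established the reduction, I would apply \lemref{mm-n3-space} to $\tilde H$ of dimension $N = \Theta(n^2)$. The recursive $2$-way divide-and-conquer MM algorithm, allocating temporary space of size $N\times N$ per level to hold the two partial sums before combining by $\min$, yields $O(\log N) = O(\log n)$ time, $O(N^3) = O(n^6)$ work, $O(N^3) = O(n^6)$ space, and $O(N^3/B) = O(n^6/B)$ cache misses. Substituting these bounds gives exactly the claim of the lemma.

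The step that deserves the most care is the one I flagged above: verifying that flattening preserves both semantics and asymptotic bounds. The semantic check reduces to the $+\infty$ argument just described. The complexity check is immediate once one observes that the work per output entry is the same $O(N)$ min-reduction as in ordinary semiring MM, so no hidden constants creep in from the $4$D layout. All other steps are routine applications of the prerequisite lemma.
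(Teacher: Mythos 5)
Your proposal is correct and in substance the same as the paper's proof: the paper likewise observes that each output cell is a $\min$-reduction over the $O(n^2)$ independent choices of the intermediate pair $(s,r)$, parallelizes it with temporary space exactly as in \lemref{mm-n3-space} to get $O(\log n)$ time and $O(n^2)$ work, space, and $O(n^2/B)$ cache per cell, and sums over the $O(n^4)$ cells. Your explicit flattening to an $N \times N$ semiring product with $N = \Theta(n^2)$, together with the check that out-of-range terms contribute $+\infty$ by \eqref{gap-H}, is just a cleaner global packaging of that same argument and yields identical bounds.
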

\begin{proof}
Referring to \eqref{gap-squaring}, since the squaring operation 
chooses
the two coordinates of intermediate cell $(s, r)$ independently
from $p$ to $i$ and $q$ to $j$, 
the update to any cell requires a $\min$ operation on
$O(n^2)$ intermediate results of $+$ operations. 
By allocating temporary space like that of  
\lemref{mm-n3-space}, it takes $O(\log n)$ time, 
$O(n^2)$ work and space with $O(n^2/B)$ cache misses
to update one cell. 
Summing up over the $O(n^4)$ cells of $H$ yields the 
conclusion.
\end{proof}
By Lemma $2$ in \cite{GalilPa94}, $H^* = I + H + H^2 + \cdots + 
H^n = H^n$ is then the solution to the GAP recurrence of 
\eqref{gap} and can 
be computed by $\log n$ repeated squaring on $H$. 
By \lemref{gap-squaring}, this straightforward computation takes 
totally
$O(\log^2 n)$ time, $O(n^6 \log n)$ work, $O(n^6)$ space,
and $O(n^6 \log n / B)$ cache misses by reusing temporary space 
across repeated squarings. 

Following \lemref{gap-square-decomp} reduces the total 
work by a $2$D reduction technique (The $1$D version
is in the proof of \lemref{1D-square-decomp}).

\begin{lemma} 
There is a $2$D square decomposition algorithm that computes 
the GAP recurrences of \eqref{gap} in $O(\log^2 n)$ time, 
$O(n^6)$ work and space, and $O(n^6/B)$ cache misses.
\label{lem:gap-square-decomp}
\end{lemma}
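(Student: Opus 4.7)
The plan is to extend the $1$D square decomposition of \lemref{1D-square-decomp} to two dimensions. I would recursively divide the $n \times n$ grid into four quadrants $Q_{00}, Q_{01}, Q_{10}, Q_{11}$ of side $n/2$ (as depicted in \figref{gap-square-decomp}), and recursively compute $H^*$ restricted to each quadrant, i.e., the shortest-path entries $H^*(p,q,i,j)$ for which both endpoints lie in the same quadrant. Because none of the four subproblems shares any entries to update, the four recursive calls can proceed in parallel, analogous to how the two triangles $T_1$ and $T_2$ are independent in the $1$D case.

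For the combining step, I would invoke a constant number of squaring operations from \lemref{gap-squaring} on the dimension-$n$ $4$D matrix partially filled in by the recursive calls. This is the $2$D analog of the $1$D identity $Q^* = T_1 Q T_2$: by monotonicity ($p \leq s \leq i$ and $q \leq r \leq j$) any cross-quadrant shortest path from $(p,q)$ to $(i,j)$ decomposes into an in-quadrant prefix already captured by the recursion, a bounded number of boundary-crossing direct edges taken from $H$, and an in-quadrant suffix also captured by the recursion. Hence $O(1)$ squarings over the entire dimension-$n$ matrix suffice to lift the four recursively computed $H^*$ blocks into the global $H^*$. By \lemref{gap-squaring}, each such squaring costs $O(\log n)$ time, $O(n^6)$ work and space, and $O(n^6/B)$ cache misses.

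Putting the pieces together yields the recurrences
\begin{align*}
    W(n) &= 4 W(n/2) + O(n^6), \\
    Q_1(n) &= 4 Q_1(n/2) + O(n^6/B), \\
    T_\infty(n) &= T_\infty(n/2) + O(\log n),
\end{align*}
which by the master theorem solve to $W(n) = O(n^6)$, $Q_1(n) = O(n^6/B)$, and $T_\infty(n) = O(\log^2 n)$. The temporary $O(n^6)$ space used inside each squaring can be reused across recursion levels, so the overall space stays $O(n^6)$. This saves a factor of $\log n$ in work and cache over the straightforward repeated-squaring computation of $H^* = H^n$ noted just before the lemma statement, which matches the improvement that \lemref{1D-square-decomp} achieved over repeated squaring in the $1$D case.

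The main obstacle will be making the combining step precise: identifying exactly which subblocks of the partially-filled $H^*$ play the roles of $T_1, Q, T_2$ in the $2$D analog of $Q^* = T_1 Q T_2$, and verifying that a \emph{constant} number of dimension-$n$ squarings suffices no matter how a monotone shortest path distributes its intermediate cells among the four quadrants. The $2$D case admits more boundary-crossing patterns than the $1$D case, but the monotonicity constraint limits the number of quadrant transitions along any shortest path to $O(1)$, which is why the combining cost remains independent of $n$ and the overall recurrence does not pick up an extra logarithmic factor.
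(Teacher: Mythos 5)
Your proposal is correct and takes essentially the same approach as the paper: the paper computes the same quadtree decomposition bottom-up over $\log n$ levels rather than top-down, and the constant number of combining squarings you anticipate is made explicit there as $H^*_{00,01} = H^*_{00} H_{00,01} H^*_{01}$ and $H^*_{10,11} = H^*_{10} H_{10,11} H^*_{11}$ computed simultaneously, followed by $H^* = H^*_{00,01} H H^*_{10,11}$ over the whole square. Your monotonicity argument for why $O(1)$ quadrant transitions suffice is exactly the justification underlying that formula, and your recurrences give the same dominance of the top level and the same $O(\log^2 n)$ time, $O(n^6)$ work and space, and $O(n^6/B)$ cache bounds.
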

\begin{proof}
Referring to \figref{gap-square-decomp}, the algorithm 
computes the GAP recurrence by recursively decomposing the
$2$D region of $H$ (a $2$D projection of $H$) into four quadrants 
and computes $H^*$ bottom up 
from the smallest squares to the largest in $\log n$ levels. 
At any level $k$, $H^*$ of $(n/2^k)^2$ squares of size $2^k$
are computed as follows.
To compute a square $Q$ from four quadrants $Q_{00}$, $Q_{01}$, 
$Q_{10}$, and $Q_{11}$, the
algorithm firstly computes $H^*_{00, 01} = H^*_{00} H_{00, 01} 
H^*_{01}$ and $H^*_{10, 11} = H^*_{10} H_{10, 11} H^*_{11}$ 
simultaneously, where $H_{00, 01}$ stands for the joint $H$ 
matrix striding quadrants $Q_{00}$ and $Q_{01}$,
and so on. Finally $H^* = H^*_{00, 01} H H^*_{10, 11}$, where 
$H$ and $H^*$ are over the entire region striding all four
quadrants. 
By \lemref{gap-squaring}, a square of size $2^k$ can 
thus be computed in $O(k)$ time, $O(2^{6k})$ work and space 
, and $O(2^{6k}/B)$ cache misses. 
Since a square's dimension at a higher 
level is geometrically larger than the
one at a lower level, the work, space, and cache bounds of 
the top-level $H^*$ computation then dominates. 
The conclusion then follows.
\end{proof}

\begin{theorem}
The final sublinear-time GAP algorithm in \cite{GalilPa94} takes
$O(\sqrt{n} \log n)$ time, $O(n^4)$ work and space 
, and $O(n^4/B)$ cache misses.
\label{thm:old-sublinear-gap}
\end{theorem}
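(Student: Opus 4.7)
The plan is to lift the strategy of \thmref{old-sublinear-1D} from one dimension to two, with \lemref{gap-square-decomp} replacing \lemref{1D-square-decomp} as the primitive for closing a block. Let $v$ be a parameter to be fixed at the end.

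First I would partition the $n \times n$ $(i,j)$ grid into $(n/v)^2$ disjoint squares of side $v$ and simultaneously compute the local closure $H^*$ restricted to each square using the $2$D square decomposition of \lemref{gap-square-decomp}. Each local closure costs $O(\log^2 v)$ time, $O(v^6)$ work and space, and $O(v^6/B)$ cache, so running all $(n/v)^2$ of them in parallel contributes $O(\log^2 v)$ time, $O((n/v)^2 \cdot v^6) = O(n^2 v^4)$ work and space, and $O(n^2 v^4/B)$ cache.

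Second I would process the remaining blocks in $O(n/v)$ sequential \emph{waves} that sweep the $2$D answer surface in a topological order respecting the GAP dependency. Each wave integrates a newly required strip of $H$ with the accumulated earlier $H^*$ blocks and the relevant diagonal square via a constant number of squarings (\lemref{gap-squaring}), exactly mirroring the $H^*_- \cdot H_\Box \cdot H^*_\triangle$ composition used in the $1$D proof but now on $2$D-shaped operands. Using the same temporary-space reuse trick as in \thmref{old-sublinear-1D}, each wave runs in $O(\log n)$ time, and the dominant squaring of the $l$-th wave combines a previously computed $H^*$ region with a fresh strip of $H$ so that the work of the whole sweep sums telescopically to $O(n^4)$, with $O(n^4)$ peak space and $O(n^4/B)$ cache. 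The time across the $O(n/v)$ waves is $O((n/v)\log n)$.

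Summing both phases yields time $O((n/v)\log n + \log^2 v)$, work and space $O(n^2 v^4 + n^4)$, and cache $O((n^2 v^4 + n^4)/B)$. Setting $v = \sqrt{n}$ balances the two contributions and gives the claimed $O(\sqrt{n}\log n)$ time, $O(n^4)$ work and space, and $O(n^4/B)$ cache. The main obstacle will be phase two: unlike the $1$D sweep along a single axis, the $2$D answer surface must be filled in a genuine $2$D topological order, and one must carefully identify, for each wave, the exact shapes of the three matrices feeding the two squarings (the $2$D analogues of $H^*_-$, $H_\Box$, and $H^*_\triangle$ in \figref{sublinear-1D}) and verify that the aggregated work of the dominant $H^* \cdot H$ squaring across all waves sums to $O(n^4)$ rather than growing super-quartically.
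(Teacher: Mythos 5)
Your proposal follows essentially the same two-phase route as the paper's proof: phase one computes the $(n/v)^2$ local closures in parallel via \lemref{gap-square-decomp} with exactly the stated costs, and phase two sweeps the answer surface in $O(n/v)$ backward-diagonal waves (the paper's topological order is $i+j=\mathrm{const}$, with all squares on a diagonal processed simultaneously), composing three operands per block just as in \thmref{old-sublinear-1D}; the parameter choice $v=\sqrt{n}$ is also the paper's. The one point you flag as an obstacle is resolved in the paper as follows: for a block on the $l$-th diagonal the three operands are each a ``combination of two independent matrices'' (since $H$ for GAP factors into a horizontal and a vertical 1D-like component) of shapes $1\times(l-1)v$, $(l-1)v\times v$, and $v\times v$ respectively; by \lemref{gap-squaring} the dominant first squaring costs $O(\log(lv))$ time and $O(l^2v^4)$ work, space, and $O(l^2v^4/B)$ cache per block, and summing $O(l)$ blocks per diagonal over $l\le n/v$ diagonals gives $\sum_l l\cdot l^2 v^4 = O((n/v)^4 v^4)=O(n^4)$, so the sweep is quartic and not super-quartic. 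With that accounting supplied, your argument matches the paper's.
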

\begin{proof}
Let $f(i, j)$ be the length of the shortest path from $(0, 0)$ to
$(i, j)$, their algorithm works as follows.
\begin{enumerate}
    \item It computes $H^*$ of the squares from bottom up 
        until level $k$ such
        that $2^k = v$, where $v$ is a parameter to be determined 
        later. Each square of $H^*(i, j, i+v, j+v)$ contains all
        pairs of shortest paths within the 2D region bounded
        between cell $(i, j)$ and $(i+v, j+v)$.
        By \lemref{gap-square-decomp}, each square's
        computation takes $O(\log^2 v)$ time, $O(v^6)$ work and 
        space, with $O(v^6/B)$ cache misses. Since there are 
        $O(n^2/v^2)$ of them, which can be computed simultaneously
        , the costs of this step sum up to $O(n^2 v^4)$ work 
        and space, with $O(n^2 v^4/B)$ cache misses.

    \item It computes $f(i, j)$ from cell 
        $(0, 0)$ to $(n, n)$ at step size of $v$ by 
        backward diagonal, i.e. $i + j$ is constant, in $2n/v$ 
        iterations.
        Squares on the same backward diagonal are computed 
        simultaneously in the same iteration.
        As in the 1D version (refer to the proof of 
        \thmref{old-sublinear-1D}), computing
        any square $f(i, j)$ on the $l$-th iteration, i.e.
        $i + j = lv$, requires squaring of following three 
        matrices.
        \begin{enumerate}
            \item All previous squares of 
                $f(0, 0) \twodots f(i-v, j-v)$, which can be 
                viewed as a combination of two independent 
                $1$-by-$(l-1)v$ 
                matrices. We denote it by a $(1 \times (l-1)v)^2$ 
                matrix for convenience.
            \item The $H$ values of $l$-th iteration, which
                can be viewed as a combination of two independent 
                $(l-1)v$-by-$v$ 
                matrices. We denote it by a $((l-1)v \times v)^2$
                matrix.
            \item The square of level-$k$ $H^*$ that is computed
                in above step (1) and is on the $l$-th 
                iteration, which can be viewed as a combination
                of two independent $v$-by-$v$ triangular matrices.
                We denote it by a $(v \times v)^2$ matrix.
        \end{enumerate}
        As in the 1D case, the squaring of
        $(1 \times (l-1)v)^2$ matrix with $((l-1)v \times v)^2$ 
        matrix dominates. By \lemref{gap-squaring}, the squaring 
        of one $f(i, j)$ takes $O(\log lv)$ time, $O(l^2 v^4)$ 
        work and space, with $O(l^2 v^4/B)$ cache misses.

        Summing over the $n/v$ iterations for $n^2/v^2$ 
        squares of $f$, the costs of this step are 
        $O(n/v \log n)$ time, $O(n^4)$ work and space, and 
        $O(n^4/B)$ cache misses.
\end{enumerate}

Summing up the overheads of the two steps and making 
$v = \sqrt{n}$ yields the bounds.
\end{proof}

\subsecput{cop-gap}{Cache-Oblivious Parallel GAP algorithm}
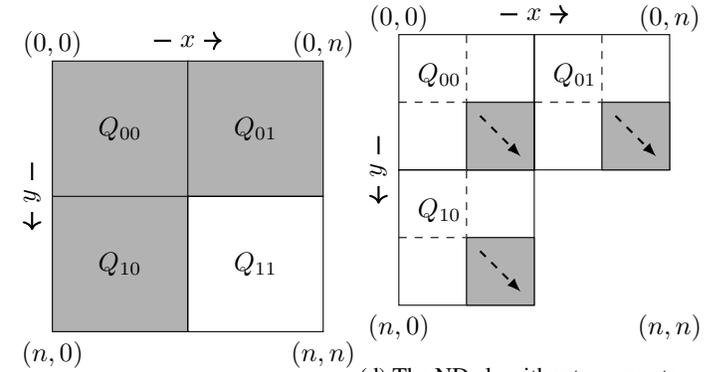
\begin{figure}[!ht]
\begin{subfigure}[b]{0.51 \linewidth}
\begin{center}
\begin{codebox}
\Procname{$\proc{cop-gap}_{\tri}(A)$}
\li $\proc{cop-gap}_{\tri}(A_{00})$
\li $\serial ((\proc{cop-gap-h}_{\Box}(A_{01}, A_{00})$
\zi \quad \quad $\serial \proc{cop-gap}_{\tri}(A_{01}))$
\li \quad $\parallel (\proc{cop-gap-v}_{\Box}(A_{10}, A_{00})$
\zi \quad \quad $\serial \proc{cop-gap}_{\tri}(A_{10})))$
\li $\serial \proc{cop-gap-v}_{\Box}(A_{11}, A_{01})$
\li $\serial \proc{cop-gap-h}_{\Box}(A_{11}, A_{10})$
\li $\serial \proc{cop-gap}_{\tri}(A_{11})$
\li \Return
\end{codebox}
\vspace*{-1 \baselineskip}
\caption{The COP algorithm for a triangular
region $A$.}
\label{fig:cop-gap-tri}
\end{center}
\end{subfigure}
\hfil
\begin{subfigure}[b]{0.48 \linewidth}
\begin{center}
\begin{codebox}
\Procname{$\proc{cop-gap-h}_{\Box}(A, B)$}
\li $(\proc{cop-gap-h}_{\Box}(A_{00}, B_{00})$
\li \Do $\serial \proc{cop-gap-h}_{\Box}(A_{00}, B_{01}))$
\End
\li $\parallel (\proc{cop-gap-h}_{\Box}(A_{01}, B_{00})$
\li \Do $\serial \proc{cop-gap-h}_{\Box}(A_{01}, B_{01}))$
\End
\li $\parallel (\proc{cop-gap-h}_{\Box}(A_{10}, B_{10})$
\li \Do $\serial \proc{cop-gap-h}_{\Box}(A_{10}, B_{11}))$
\End
\li $\parallel (\proc{cop-gap-h}_{\Box}(A_{11}, B_{10})$
\li \Do $\serial \proc{cop-gap-h}_{\Box}(A_{11}, B_{11}))$
\End
\li \Return
\end{codebox}
\vspace*{-1 \baselineskip}
\caption{The COP algorithm to update a
rectangular region $A$ from $B$.}
\label{fig:cop-gap-box}
\end{center}
\end{subfigure}
\vfil
\begin{subfigure}[b]{0.48 \linewidth}
\begin{center}
\begin{tikzpicture}[scale = 1.8]
\draw [thick, ->] (0.75, 2.15) -- (1.25, 2.15) node[fill=white, pos=.5]{$x$}; 
\draw [thick, ->] (-0.15, 1.25) -- (-0.15, 0.75) node[fill=white, rotate=90, pos=.5]{$y$}; 
\node[above, yshift=-0.8mm] at (0, 2) {$(0, 0)$};
\node[above, yshift=-0.8mm] at (2, 2) {$(0, n)$};
\node[below] at (0, 0) {$(n, 0)$};
\node[below] at (2, 0) {$(n, n)$};

\draw [fill=gray!60] (0, 0) rectangle (1, 1);
\node at (0.5, 0.5) {$Q_{10}$};

\draw [fill=gray!60] (0, 1) rectangle (1, 2);
\node at (0.5, 1.5) {$Q_{00}$};

\draw [fill=gray!60] (1, 1) rectangle (2, 2);
\node at (1.5, 1.5) {$Q_{01}$};

\draw (1, 0) rectangle (2, 1);
\node at (1.5, 0.5) {$Q_{11}$};
\end{tikzpicture}
\caption{The ND algorithm}
\label{fig:nd-gap-shape}
\end{center}
\end{subfigure}
\hfill
\begin{subfigure}[b]{0.48 \linewidth}
\begin{center}
\begin{tikzpicture}[scale = 1.8]
\draw [thick, ->] (0.75, 2.15) -- (1.25, 2.15) node[fill=white, pos=.5]{$x$}; 
\draw [thick, ->] (-0.15, 1.25) -- (-0.15, 0.75) node[fill=white, rotate=90, pos=.5]{$y$}; 
\node[above, yshift=-0.8mm] at (0, 2) {$(0, 0)$};
\node[above, yshift=-0.8mm] at (2, 2) {$(0, n)$};
\node[below] at (0, 0) {$(n, 0)$};
\node[below] at (2, 0) {$(n, n)$};

\draw (0, 0) rectangle (1, 1);
\draw [dashed] (0, .5) -- (1, .5);
\draw [dashed] (.5, 0) -- (.5, 1);
\draw [fill=gray!60] (.5, 0) rectangle (1, .5);
\draw [-latex, thick, dashed] (.5 + .1, .5 - .1) -- (1 - .1, .1);
\node at (0.3, 0.7) {$Q_{10}$};

\draw (0, 1) rectangle (1, 2);
\draw [dashed] (0, 1.5) -- (1, 1.5);
\draw [dashed] (.5, 1) -- (.5, 2);
\draw [fill=gray!60] (.5, 1) rectangle (1, 1.5);
\draw [-latex, thick, dashed] (.5 + .1, 1 + .5 - .1) -- (1 - .1, 1+ .1);
\node at (0.3, 1.7) {$Q_{00}$};

\draw (1, 1) rectangle (2, 2);
\draw [dashed] (1, 1.5) -- (2, 1.5);
\draw [dashed] (1.5, 1) -- (1.5, 2);
\draw [fill=gray!60] (1.5, 1) rectangle (2, 1.5);
\draw [-latex, thick, dashed] (1 + .5 + .1, 1 + .5 - .1) -- (1 + 1 - .1, 1 + .1);
\node at (1.3, 1.7) {$Q_{01}$};
\end{tikzpicture}
\caption{The ND algorithm to compute a $\Gamma$-shape region.}
\label{fig:nd-gap-gamma-shape}
\end{center}
\end{subfigure}
\caption{Pseudo-code of cache-oblivious GAP algorithms}
\label{fig:co-gap}
\end{figure}

Chowdhury and Ramachandran~\cite{ChowdhuryRa06, Chowdhury07} 
devised a cache-oblivious parallel (COP) algorithm for the 
GAP recurrences of \eqref{gap}. 
The algorithm separates
the updates to any quadrant to two functions, one is an update
from cells within the same quadrant, and the
other is an update from a disjoint quadrant either 
horizontally or vertically. The $\proc{cop-gap}_{\tri}$ function 
in \figref{cop-gap-tri} is the self-updating function,
the computational shape (total work) of which is a 
3D triangular analogue as shown in the upper-right corner of 
\figref{gap-work}. The $\proc{cop-gap-h}_{\Box}\allowbreak 
(A, B)$ function in \figref{cop-gap-box} is to update quadrant 
$A$ from a disjoint quadrant $B$ horizontally, the computational 
shape of which is a 3D cube as shown in the bottom-right corner 
of \figref{gap-work}. The update from a vertical direction is 
similar thus omitted.

\begin{lemma}
The COP algorithm in \figref{cop-gap-tri} computes the GAP
recurrences of \eqref{gap} in $O(n^{\log_2 3})$ time, 
optimal $O(n^2)$ space, optimal $O(n^3)$ work, and optimal 
$O(n^3 / (B \sqrt{M}))$ cache bounds 
\cite{ChowdhuryRa06, Chowdhury07}.
\label{lem:cop-gap}
\end{lemma}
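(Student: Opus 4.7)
The plan is to analyze the two mutually recursive routines $\proc{cop-gap-h}_{\Box}$ and $\proc{cop-gap}_{\tri}$ of \figreftwo{cop-gap-tri}{cop-gap-box} by writing work, span, cache, and space recurrences, then solving each via the master theorem. Since the GAP dependency in \eqref{gap} makes any cell depend only on cells horizontally or vertically to its left/above, the algorithm splits the updates into self-updates on a triangular region and cross-updates from a disjoint rectangular region. Because neither routine allocates auxiliary storage, the $O(n^2)$ space bound is optimal and immediate from the fact that both routines write in place on the DP table $D$.

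First I would analyze the rectangular update. Its pseudo-code fires eight half-size recursive calls arranged as four parallel groups of length-$2$ serial chains, so
\begin{align*}
T_1^{\Box}(n) &= 8\, T_1^{\Box}(n/2), & T_\infty^{\Box}(n) &= 2\, T_\infty^{\Box}(n/2),
\end{align*}
which solve to $T_1^{\Box}(n) = O(n^3)$ and $T_\infty^{\Box}(n) = O(n)$. For the cache analysis, once the input/output quadrants of a subproblem fit in a cache of size $M$ under the tall-cache assumption (i.e.\ for $n = \Theta(\sqrt{M})$), the base cost is $O(M/B)$; unrolling $Q_1^{\Box}(n) = 8\, Q_1^{\Box}(n/2)$ from this base gives $Q_1^{\Box}(n) = O(n^3/(B\sqrt{M}))$.

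Next I would analyze the triangular routine. Tracing the pseudo-code in \figref{cop-gap-tri}, one recursive step traverses $\tri(A_{00})$ serially, then the parallel pair whose longer branch is one rectangular call followed by a triangular call on $A_{01}$ or $A_{10}$, then two further rectangular calls into $A_{11}$, and finally $\tri(A_{11})$; counting all recursive invocations gives four triangular and four rectangular subcalls at half size, so
\begin{align*}
T_1^{\tri}(n) &= 4\, T_1^{\tri}(n/2) + 4\, T_1^{\Box}(n/2), \\
T_\infty^{\tri}(n) &= 3\, T_\infty^{\tri}(n/2) + 3\, T_\infty^{\Box}(n/2), \\
Q_1^{\tri}(n) &= 4\, Q_1^{\tri}(n/2) + 4\, Q_1^{\Box}(n/2).
\end{align*}
Plugging in the already solved bounds for the $\Box$ routine gives $T_1^{\tri}(n) = O(n^3)$ and $Q_1^{\tri}(n) = O(n^3/(B\sqrt{M}))$, both optimal by the standard serial lower bounds for GAP. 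The span recurrence has driving term $O(n)$, which is dominated by the $3\, T_\infty^{\tri}(n/2)$ branch since $n^{\log_2 3} > n$; the master theorem then yields $T_\infty^{\tri}(n) = O(n^{\log_2 3})$, matching the claimed time bound.

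The main obstacle is not the recurrence solving itself but verifying that the prescribed serial/parallel schedule in \figreftwo{cop-gap-tri}{cop-gap-box} actually respects every data dependency of \eqref{gap} so that no cell is read before it has been finalized, and that the two quadrants passed to a rectangular cross-update are indeed disjoint so the CREW write rule is honored. This correctness argument has already been worked out by Chowdhury and Ramachandran \cite{ChowdhuryRa06, Chowdhury07}, and I would invoke their result rather than rederive it; with correctness in hand, all four bounds follow from the recurrences above.
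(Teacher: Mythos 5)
Your proposal is correct and takes essentially the same route as the paper: the span and cache recurrences you write ($T_\infty^{\tri}(n) = 3T_\infty^{\tri}(n/2) + 3T_\infty^{\Box}(n/2)$, $Q_1^{\Box}(n) = 8Q_1^{\Box}(n/2)$ with the $n^2 \le \epsilon M$ base case, etc.) are exactly those in the paper's proof, and your in-place-writing argument for the $O(n^2)$ space bound matches as well. The only difference is that you additionally spell out the work recurrences and the leaf-counting for the cache bound, which the paper leaves implicit.
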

%
\begin{proof}
\paragrf{Space bound: }
Since different functions updating the same quadrant are 
explicitly separated by synchronizations, it's easy to see 
that it uses no more space than the input $2$D array of $A$. 

\paragrf{Time and cache bound: }
The algorithm has the following recurrences for 
time and cache bounds, which solves to $O(n^{\log_2 3})$ and
$O(n^3/(B\sqrt{M}))$, respectively. 
\punt{
\begin{align*}
    T_{\infty, \tri}(n) &= 3 T_{\infty, \tri}(n/2) + 3 T_{\infty, \Box}(n/2) & T_{\infty, \Box}(n) &= 2 T_{\infty, \Box}(n/2) \\
    Q_{1, \tri} (n) &= 4 Q_{1, \tri} (n/2) + 4 Q_{1, \Box} (n/2) & Q_{1, \Box} (n) &= 8 Q_{1, \Box} (n/2) \\
    Q_{1, \tri} (n) &= Q_{1, \Box} (n) = O(n^2/B) & &\text{if $n^2 \leq \epsilon_8 M$} 
\end{align*}
}
\begin{align*}
    T_{\infty, \tri}(n) &= 3 T_{\infty, \tri}(n/2) + 3 T_{\infty, \Box}(n/2) \\
    T_{\infty, \Box}(n) &= 2 T_{\infty, \Box}(n/2) \\
    Q_{1, \tri} (n) &= 4 Q_{1, \tri} (n/2) + 4 Q_{1, \Box} (n/2) \\
    Q_{1, \Box} (n) &= 8 Q_{1, \Box} (n/2) \\
    Q_{1, \tri} (n) &= Q_{1, \Box} (n) = O(n^2/B) \quad \text{if $n^2 \leq \epsilon_8 M$} 
\end{align*}

\end{proof}

\subsecput{nd-gap}{Nested Dataflow GAP algorithm}
\begin{theorem}
There is an ND algorithm that solves the GAP recurrences of 
\eqref{gap} in $O(n \log n)$ time, optimal $O(n^3)$ work, optimal 
$O(n^2)$ space, and optimal $O(n^3/(B\sqrt{M}))$ cache bounds. 
\label{thm:nd-gap}
\end{theorem}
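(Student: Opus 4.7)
The plan is to follow the same strategy used to pass from the COP 1D algorithm (\lemref{cop-1D}) to the ND 1D algorithm (\thmref{nd-1D}): starting from the COP GAP algorithm of \lemref{cop-gap}, I would replace the top-level $\serial$ constructs in $\proc{cop-gap}_{\tri}$ by partial-dependency ($\fire$) constructs, together with appropriately designed fire rules, so that a sink subtask can begin once the specific source subtask(s) it actually requires are complete rather than waiting for the entire source task. Since this refinement only reorders the fine-grained schedule while preserving the recursive structure of the COP algorithm, the work, space and cache bounds are inherited directly from \lemref{cop-gap}; only the time bound improves.

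First I would identify the source of the $O(n^{\log_2 3})$ time in \lemref{cop-gap}: the three triangular subproblems on lines 1, 2--3, and 6 of $\proc{cop-gap}_{\tri}$ (\figref{cop-gap-tri}) sit on the critical path in series, $A_{00} \serial (A_{01}\text{ or }A_{10}) \serial A_{11}$, giving $T_{\infty,\tri}(n) = 3 T_{\infty,\tri}(n/2) + 3 T_{\infty,\Box}(n/2)$. To break this chain, I would introduce a fire rule whose refinement is a 2D analogue of the one in \figref{nd-1D-fire-rule}: each quadrant of a sink triangle depends partially on only the corresponding quadrant(s) of the preceding triangle or box, and the $\fire$ between a triangle and its neighbouring box updates similarly refines along quadrant boundaries until it degenerates to $\serial$ at single-cell base cases. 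Analogous fire rules refine the box-to-triangle and box-to-box dependencies connecting $A_{01}/A_{10}$ to $A_{11}$.

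Under this refinement, the recurrence becomes $T_{\infty,\tri}(n) = 2 T_{\infty,\tri}(n/2) + T_{\infty,\tfire{\tri\Box}}(n/2)$ with $T_{\infty,\tfire{\tri\Box}}(n) = 2 T_{\infty,\tfire{\tri\Box}}(n/2)$ and a constant-time base case, yielding $T_{\infty,\tfire{\tri\Box}}(n) = O(n)$ and hence $T_{\infty,\tri}(n) = O(n \log n)$. The optimal work, space, and cache bounds are then inherited from \lemref{cop-gap}, because the ND only reschedules subtasks without allocating additional storage or altering the cache-oblivious recursive execution order.

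The main obstacle, as in the 1D case, will be verifying correctness of the fire rule: I must show that whenever a sink subtask is fired, the source subtasks on which it depends have already produced their \emph{final} values rather than intermediate writes. In GAP this is subtler than in 1D because each output cell aggregates along both an entire row and an entire column, so the quadrant-to-quadrant dependency map at each level of refinement has to be checked carefully. Once correctness is established, solving the time recurrence and carrying through the work, space, and cache bounds is essentially mechanical.
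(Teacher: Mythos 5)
Your overall strategy --- refine the $\serial$ constructs of the COP GAP algorithm of \lemref{cop-gap} into $\fire$ constructs with recursively applied fire rules, and inherit the work, space, and cache bounds because the ND refinement only reschedules subtasks without changing the recursive execution order --- is exactly the paper's, and your inheritance argument for the non-time bounds matches the paper's closing paragraph. The gap is in the time analysis. You assert that quadrant-wise fire rules turn $T_{\infty,\tri}(n)=3T_{\infty,\tri}(n/2)+3T_{\infty,\Box}(n/2)$ into $T_{\infty,\tri}(n)=2T_{\infty,\tri}(n/2)+T_{\infty,\tfire{\tri\Box}}(n/2)$ with $T_{\infty,\tfire{\tri\Box}}(n)=2T_{\infty,\tfire{\tri\Box}}(n/2)=O(n)$, but a direct 2D analogue of \figref{nd-1D-fire-rule} does not obviously deliver this. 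In the 1D case, absorbing the source triangle $A_{00}$ into the $\tfire{\tri\Box}$ computation is self-consistent because $T_{\infty,\tri}(n)$ and $T_{\infty,\tfire{\tri\Box}}(n)$ are both $\Theta(n)$; in GAP, if $T_{\infty,\tri}(n)=\Theta(n\log n)$, then any fire computation whose critical path contains a full recursive size-$(n/2)$ triangle costs $\Omega((n/2)\log(n/2))$, which breaks the claimed $F(n)=2F(n/2)$ recurrence, and if instead no triangle is absorbed you are back to three triangles in series. Your target recurrence is thus asserted rather than realized by a concrete schedule.

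The paper's missing ingredient is a structural observation your proposal does not contain: except for $Q_{11,\ldots,11}$, every nested $(11)$ quadrant performs the same amount of work as, and has no data dependency on, the $(00)$ quadrants of its parent's siblings lying on the same backward diagonal ($x+y$ constant), so it can be lifted one recursion level up and executed concurrently with them. This turns the whole computation into a wavefront sweep by backward diagonals, organized as a $\Gamma$-shaped computation $\proc{nd-gap}_{\Gamma}(Q_{00,01,10}) \tfire{\Gamma\Box} \proc{cop-gap}_{\Box}(Q_{11})$ followed by a single recursive call on $Q_{11}$ (\figref{nd-gap-tri}); the lifted $(11)$ quadrants appear in the time recurrence \eqref{nd-gap-big-fire} as a $\max\{T_{\infty,\proc{nd-gap}_{\tri}}(n/4),\cdot\}$ term subsumed by the second wavefront, which is what keeps the fire computations at $O(n)$. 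The paper also explicitly accounts for the fact that the two box updates into $Q_{11}$ (from $Q_{01}$ and from $Q_{10}$) write to the same output region and must be serialized under the CREW assumption, contributing the extra $T_{\infty,\proc{cop-gap}_{\Box}}(n/2)$ term that your recurrence omits. You correctly flag fire-rule correctness as the main obstacle, but the harder obstacle is this scheduling and accounting step, and it is where the actual content of the paper's proof lies.
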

\begin{proof}
Let's label all the quadrants recursively as in 
\figref{nd-gap-shape}.
\afigref{nd-gap-shape} is the 2D projection of a 3D 
triangular analogue in \figref{gap-work} from its 3D $o-xyz$ 
space to the $o-xy$ plane.
We have an observation that except the last $(11)$ 
quadrant, i.e. $Q_{11, 11, \ldots, 11}$, all $(11)$ 
quadrants nested in a recursion have the same amount
of computation as the $(00)$ quadrants of some of its parent's
siblings, i.e. the quadrants that lie on 
the same backward diagonal (i.e. $x + y$ is constant).
For instances, the $(11)$ quadrant of $Q_{00}$, i.e. 
$Q_{00, 11}$, has the same amount
of computation as the $(00)$ quadrants of some of its parent's 
siblings, i.e. $Q_{01, 00}$ and $Q_{10, 00}$;
$Q_{01, 11}$ and $Q_{10, 11}$ have the same
amount of computation as $Q_{11, 00}$, and so on.
Moreover, these quadrants on the same backward diagonal do
not have any data dependencies among each other so can
be scheduled to run in parallel.
By this observation, except the $Q_{11, 11, \ldots, 11}$
quadrant, all other $(11)$ quadrants can be pushed one
level up in the recursion and run simultaneously
with some of its parent's siblings as shown in  
\figreftwo{nd-gap-shape}{nd-gap-gamma-shape}. And this 
execution pattern proceeds recursively.

The pseudo-code of $\proc{nd-gap}_{\tri}$ algorithm is in 
\figref{nd-gap-tri}. 
The key improvement over the classic COP algorithm comes from 
associating every $\proc{cop-gap}_{\Box}$ subtask with its 
source $\proc{cop-gap}_{\tri}$ by a partially parallel ``$\fire$''
construct. The partially parallel $\fire$ 
construct will then be refined recursively by fire rule 
throughout computation. 
\punt{
Since the refinement of a $\fire$ 
construct maintains only necessary data dependency 
to keep the algorithm correct, 
it won't introduce any artificial dependency 
\footnote{\defn{Artificial dependency} is extra control 
dependency in a parallel program than necessary 
data dependency. Artificial 
dependency is usually imposed by restricted parallel 
constructs \cite{TangYoKa15, DinhSiTa16}.} to hurt critical
path length as in classic cache-oblivious approach.
}
By fine-grain interleaving of data dependency across 
borders of recursion, each base case of 
$\proc{cop-gap}_{\Box}$ can start computing as soon as 
corresponding source subtask produces the data. 
At the same time, the recursive executing order among subtasks
is preserved. The interleaving just allows some subtasks at a 
higher level of recursion to start executing earlier.
The computation frontier proceeds by a 2D plane whose 
projection on the $o-xy$ plane aligns with some 
backward diagonal, i.e.  $x + y$ is constant. 
Referring to \figref{nd-gap-shape}, the 
projection of proceeding plane on the bottom 
$o-xy$ plane sweeps from cell $(0, 0)$ to $(n, n)$
by backward diagonals.
By contrast, a $\proc{cop-gap-h/v}_{\Box}$ computation in 
\figref{cop-gap-tri} has to wait until the entire 
$\proc{cop-gap}_{\tri}$ at the same recursion level finishes.
That is, it introduces excessive control dependency to subtasks 
at lower levels.
\begin{figure}[!ht]
\begin{minipage}[b]{0.48 \linewidth}
\begin{subfigure}[b]{ \linewidth}
\begin{center}
\begin{codebox}
\Procname{$\proc{nd-gap}_{\tri}(Q)$}
\li $\proc{nd-gap}_{\Gamma}(Q_{00, 01, 10})$
\zi \quad $\tfire{\Gamma\Box} \proc{cop-gap}_{\Box}(Q_{11}) \serial$
\li $\proc{nd-gap}_{\tri}(Q_{11}) \serial$
\li \Return
\end{codebox}
\vspace*{-1 \baselineskip}
\caption{The ND algorithm}
\label{fig:nd-gap-tri}
\end{center}
\end{subfigure}
\vfil
\begin{subfigure}[b]{ \linewidth}
\begin{center}
\begin{align*}
    & \oone{+} \tfire{\Gamma\Box} \oone{-} = \{\\ 
    &     \quad \oone{+}_{00} \tfire{\tri\Box}_{H/V} \{\oone{+}_{01}, \oone{+}_{10}\} \\
    &     \quad , \{\oone{+}_{01}, \oone{+}_{10}\} \tfire{\tri\Box}_{H/V} \{\oone{-}\}\}
\end{align*}
\vspace*{-1 \baselineskip}
\caption{The fire rule of ``$\tfire{\Gamma\Box}$'' computation}
\label{fig:nd-gap-big-fire-rule}
\end{center}
\end{subfigure}
\end{minipage}
\hfil
\begin{minipage}[b]{0.48 \linewidth}
\begin{subfigure}[b]{ \linewidth}
\begin{center}
\begin{align*}
    &  \oone{+} \tfire{\tri\Box}_{H} \oone{-} = \{ \\
    & \quad \{\oone{+}_{00} \tfire{\tri\Box}_{H} \{\oone{+}_{01}, \oone{+}_{10} \\
    & \quad \quad , \oone{-}_{00}, \oone{-}_{01}\} \\
    & \quad , \oone{+}_{01} \tfire{\tri\Box}_{H} \{\oone{-}_{00}, \oone{-}_{01}\} \\
    & \quad , \oone{+}_{10} \tfire{\tri\Box}_{H} \{\oone{-}_{10}, \oone{-}_{11}\}\} \\
    & \quad , \{\oone{+}_{11} \tfire{\tri\Box}_{H} \{\oone{-}_{10}, \oone{-}_{11}\}\}\}
\end{align*}
\vspace*{-1 \baselineskip}
\caption{The fire rule of ``$\tfire{\tri\Box}$'' computation}
\label{fig:nd-gap-small-fire-rule}
\end{center}
\end{subfigure}
\end{minipage}
\caption{Pseudo-code of ND GAP algorithm}
\label{fig:nd-gap-code}
\end{figure}

Referring to \figref{nd-gap-code}, the ND algorithm proceeds 
as follows.
\afigref{nd-gap-tri} says that the computation of a GAP recurrence 
without external dependency is accomplished by a partially 
parallel $\tfire{\Gamma\Box}$ computation followed by a recursion
on a geometrically smaller $Q_{11}$ quadrant. 
Note that the partially parallel 
$\tfire{\Gamma\Box}$ computation not only computes the 
$\Gamma$ shape region of $Q_{00, 01, 10}$ but also invokes 
$\proc{cop-gap}_{\Box}$ function (\figref{cop-gap-box}) 
to update $Q_{11}$ quadrant with the data from
the $\Gamma$ shape region. 
The big partially parallel $\tfire{\Gamma\Box}$ computation will 
then be refined by the fire rule in \figref{nd-gap-big-fire-rule} 
to two wavefronts that comprise 
four geometrically smaller $\tfire{\tri\Box}$ computations. The 
two wavefronts execute one after another as follows. 
Referring to \figref{nd-gap-gamma-shape}
\footnote{The $\oone{+}$ and $\oone{-}$
in \figreftwo{nd-gap-big-fire-rule}{nd-gap-small-fire-rule} are 
wildcards that will match at runtime to the source and sink 
subtasks of corresponding $\fire$ construct, respectively.
}
, it firstly executes the partially parallel computation of
$Q_{00} \tfire{\tri\Box}_{H/V} \{Q_{01}, Q_{10}\}$ 
(the first wavefront), 
\footnote{This is a shorthand for two partially parallel 
computations of $Q_{00} \tfire{\tri\Box}_H Q_{01}$ and $Q_{00} 
\tfire{\tri\Box}_V Q_{10}$.}
followed by $\{Q_{01}, Q_{10}\} \tfire{\tri\Box}_{H/V} Q_{11}$ 
(the second wavefront)
\footnote{Similarly, this is a shorthand for two partially 
parallel computations of $Q_{01} \tfire{\tri\Box}_V Q_{11}$ and 
$Q_{10} \tfire{\tri\Box}_H Q_{11}$}.
The partially parallel computations of the first wavefront share 
the same source and will have their sink tasks executed 
simultaneously.
By contrast, the second wavefront executes two distinct source 
tasks ($\proc{nd-gap}_{\tri}$) concurrently, while sink 
tasks ($\proc{cop-gap}_{\Box}$) serially because 
the two sinks write to the same output region of $Q_{11}$.
After the second wavefront, $Q_{11}$ quadrant will have been 
updated with the data from
the $\Gamma$ shape region of $Q_{00, 01, 10}$ and can start
a recursive GAP computation of $\proc{nd-gap}_{\tri}$ on 
itself without external data dependency.
The $\tfire{\tri\Box}$ construct recursively refines the 
partially parallel invocation of $\proc{nd-gap}_{\tri}$ 
($3$D triangular analogue) and $\proc{cop-gap}_{\Box}$
($3$D cube) to two parallel steps. There is a synchronization 
between the two parallel steps to guard correctness.
\afigref{nd-gap-small-fire-rule} shows the horizontal refinement 
, and the vertical direction is similar. 
Referring to 
\figreftwo{nd-gap-gamma-shape}{nd-gap-small-fire-rule},
$Q_{00} \tfire{\tri\Box} Q_{01}$ is refined to 
$Q_{00, 00} \tfire{\tri\Box} \{Q_{00, 01}, Q_{00, 10}, Q_{01, 00}
, Q_{01, 01}\}$ 
for the first parallel step and 
$\{Q_{00, 01}, Q_{00, 10}\} \tfire{\tri\Box} \{Q_{00, 00}
, Q_{00, 01}, Q_{00, 10}\}$ 
for the second parallel step. 
The rule of
$\{\oone{+}_{11} \tfire{\tri\Box}_H \{\oone{-}_{10}, 
\oone{-}_{11}\}\}$ in \figref{nd-gap-small-fire-rule}
is a nested $\tfire{\tri\Box}$ computation 
on a geometrically smaller $(11)$ quadrant, which will
run concurrently with some of its parent's siblings
on the same backward diagonal.

By the ND method, we have following recurrences for time 
bound.
\begin{align}
    & T_{\infty, \proc{nd-gap}_{\tri}} (n) = T_{\infty, \proc{nd-gap}_{\tfire{\Gamma\Box}}} (n) + T_{\infty, \proc{nd-gap}_{\tri}} (n/2) \label{eq:nd-gap-tri}\\
    & T_{\infty, \proc{nd-gap}_{\tfire{\Gamma\Box}}} (n) = T_{\infty, \proc{nd-gap}_{\tfire{\tri\Box}}} (n/2) \nonumber\\
    &           \quad\quad\quad\quad+ \max \{T_{\infty, \proc{nd-gap}_{\tri}}(n/4), T_{\infty, \proc{nd-gap}_{\tfire{\tri\Box}}} (n/2)\} \nonumber\\
    &           \quad\quad\quad\quad+ T_{\infty, \proc{cop-gap}_{\Box}} (n/2) \label{eq:nd-gap-big-fire}\\
    & T_{\infty, \proc{cop-gap}_{\Box}} (n) = 2 T_{\infty, \proc{cop-gap}_{\Box}} (n/2) \label{eq:cop-gap-box}\\
    & T_{\infty, \proc{nd-gap}_{\tfire{\tri\Box}}} (n) = 2 T_{\infty, \proc{nd-gap}_{\tfire{\tri\Box}}} (n/2) = O(n) \label{eq:nd-gap-small-fire}\\
    & T_{\infty, \proc{nd-gap}_{\tfire{\tri\Box}}} (1) = T_{\infty, \proc{nd-gap}_{\tri}} (1) + T_{\infty, \proc{nd-gap}_{\Box}} (1) \label{eq:nd-gap-small-fire-stop}
\end{align} 
The $\max\{\ldots\}$ term of \eqref{nd-gap-big-fire} says that 
the recursively nested $(11)$ quadrant will be executed 
concurrently with the second wavefront. 
Referring to \figref{nd-gap-gamma-shape}, these are the shaded 
$(11)$ quadrants. 
Since these $(11)$ quadrants are geometrically smaller, their 
execution will be completely subsumed 
in the $\max\{\ldots\}$, hence will not affect the critical-path
length. 
The last $T_{\infty, \proc{cop-gap}_{\Box}} (n/2)$ term of 
\eqref{nd-gap-big-fire} says that there are two partially 
parallel $\tfire{\tri\Box}$ computations of the second 
wavefront to update the same output quadrant ($Q_{11}$) so 
that one update has to lay behind.
Note that their source tasks have no dependency on 
each other, thus can still execute concurrently. 
\aeqref{cop-gap-box} recursively expands the computation of 
$\proc{cop-gap}_{\Box}$ without using any more space
than inputs and output.
\aeqref{nd-gap-small-fire-stop} says that the recursive 
refinement of $\tfire{\tri\Box}$ computation will stop and
reduce to classic serial construct (``$\serial$'') at 
base cases.
The time recurrences solve to 
$O(n \log n)$, which is asymptotically better than the classic 
COP algorithm in \lemref{cop-gap}.

Since the ND algorithm just schedules $\proc{cop-gap}_{\Box}$  
($3$D cubes) to run as soon as its sources 
($3$D triangular analogues) produce the data without changing
the recursive executing order, its work, space, and cache bounds 
do not change from those of \lemref{cop-gap}. 
\end{proof}

\punt{
Dynamic Programming is an approach of using storage for the intermediate
results to avoid re-computation of shared sub-problems, thus reducing
the total work.  The closure or matrix product method, on the contrary,
shortens the critical path by redundant computation.

\begin{corollary}
    Restricting to the method of Dynamic Programming, the ND GAP algorithm of 
    \thmref{nd-gap} is asymptotically optimal.
\end{corollary}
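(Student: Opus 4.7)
The plan is to prove the Corollary by establishing matching asymptotic lower bounds on each of the four complexity measures (work, time, space, cache) for any DP algorithm solving \eqref{gap} that refrains from recomputing sub-problems, and then observing that \thmref{nd-gap} meets all four bounds.

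I would start with work, since it is the most direct. By the DP restriction every one of the $\Theta(n^2)$ cells is evaluated exactly once, and each $D[i,j]$ is a minimum over $\Omega(n)$ horizontal and vertical candidates whose $+w(q,j)$ or $+w'(p,i)$ additions depend on both endpoints and therefore cannot in general be shared across cells. This forces $\Omega(n)$ distinct additions per cell, i.e.\ a work lower bound of $\Omega(n^3)$. For space, I would observe that $D[i,j]$ requires the entire prefix row $D[i, 0..j{-}1]$ and prefix column $D[0..i{-}1, j]$, so under DP (no recomputation) some execution instant must hold $\Omega(n^2)$ live cells, forcing $\Omega(n^2)$ storage.

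For the cache lower bound, I would adapt the Hong--Kung red--blue pebble argument for matrix multiplication to the GAP dependency DAG. By \lemref{gap-square-decomp} every $v$-by-$v$ square of $H^*$ contains MM-style sub-DAGs, so the classical $\Omega(n^3/(B\sqrt{M}))$ I/O lower bound transfers to GAP, matching ND GAP's cache complexity. The adaptation is routine once one observes that each recursive square must load and store enough sub-matrix blocks to carry out its two squaring operations, to which the standard loomis-type covering argument applies.

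The most delicate step, and the one I expect to be the main obstacle, is the critical-path lower bound of $\Omega(n \log n)$. A naive chain such as $D[0,0] \to D[0,1] \to \cdots \to D[0,n]$ only yields $\Omega(n)$, because the delay per link can in principle be made $O(1)$ by a cleverly unbalanced min-tree that places $D[0,j{-}1]$ at depth $1$ in the reduction for $D[0,j]$. To push the lower bound up to $\Omega(n \log n)$ I would formalize the DP method as requiring the per-cell min-reduction to be scheduled as an oblivious (not input-driven) tree reduction, so that reducing the $j$ candidates of $D[0,j]$ contributes an additive $\Omega(\log j)$ to the critical path of the chain; summing over $j$ then yields $\Omega(n \log n)$. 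Combined with the preceding three bounds, this matches \thmref{nd-gap} on all four measures and establishes the Corollary.
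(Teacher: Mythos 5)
Your proposal is far more elaborate than the paper's own proof, which is only two sentences: it inherits the work, space, and cache optimality directly from the statement of \thmref{nd-gap} (where those three bounds are already asserted to match the best serial algorithm), and for the time bound it argues only that the critical path contains $\Omega(n)$ cells, each a minimum over $O(n)$ inputs costing $\Omega(\log n)$ on exclusive-write memory by the standard CREW reduction lower bound, whence $\Omega(n\log n)$. Your re-derivations of the $\Omega(n^3)$ work, $\Omega(n^2)$ space, and Hong--Kung-style $\Omega(n^3/(B\sqrt{M}))$ cache lower bounds are therefore extra machinery the paper never deploys (and the cache argument in particular would require genuine work to adapt the red--blue pebble covering argument to the GAP dependency DAG rather than to matrix multiplication proper). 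On the time bound you take essentially the same route as the paper, and the ``delicate step'' you flag is real: as you observe, the candidate minima for $D[i,j]$ other than the chain predecessor can be pre-reduced before that predecessor arrives, so each link of the chain can contribute only $O(1)$ depth and the naive product $\Omega(n)\cdot\Omega(\log n)$ is not by itself a valid critical-path lower bound. The paper's proof silently asserts exactly that product. Your proposed repair --- requiring each per-cell reduction to be an oblivious, input-independent tree --- does make the sum $\sum_j \Omega(\log j)=\Omega(n\log n)$ go through, but it closes the gap by building the conclusion into the definition of ``the method of Dynamic Programming,'' a restriction the paper does not state. In short: you match the paper's core argument on the one bound it actually proves, you correctly identify the soft spot that the paper glosses over, but your fix is a model restriction rather than a proof, and the other three bounds need no derivation at all given \thmref{nd-gap}.
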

\begin{proof}
    Since the work, space, cache complexity of the ND GAP algorithm are already optimal, we
    consider only its time bound as follows.
    For the GAP problem of dimension $n$, there are $\Omega(n)$ elements on the critical 
    path, each of which takes $O(n)$ inputs and $\Omega(\log n)$ time to update on an 
    exclusive-write memory \cite{JaJa92}. The conclusion then follows.
\end{proof}
}

\subsecput{sub-nd-gap}{Work-Efficient and Sublinear-Time 
GAP Algorithm}
Referring to \figref{cop-gap-box}, we can see that the 
computation of $3$D cubes by $\proc{cop-gap}_\Box$ can be 
parallelized in the same way as that of \lemref{mm-n3-space}
by using temporary space. 
We then have following \lemref{sub-gap-box}.
\begin{lemma}
We have a sublinear-time algorithm to compute the $3$D cubes of 
dimension-$n$ in \figref{gap-work} in $O(\log n)$ time, $O(n^3)$ 
work and space, and $O(n^3/B)$ cache bounds.
\label{lem:sub-gap-box}
\end{lemma}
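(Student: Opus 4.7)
The plan is to adapt the temporary-space technique used in \lemref{mm-n3-space} and in $\proc{sub-1D}_\Box$ (\figref{sub-1D-box}) to the $3$D cube computations $\proc{cop-gap-h}_\Box$ / $\proc{cop-gap-v}_\Box$ of \figref{cop-gap-box}. Observe that those algorithms are already $2$-way divide-and-conquer recursions over a $3$D cube: each dimension-$n$ cube is split into eight dimension-$(n/2)$ sub-cubes. The only reason the code decomposes into four parallel groups of two serial calls, rather than eight fully parallel calls, is the CREW assumption: within each parallel group, the two sub-cubes contribute to the same output quadrant $A_{ij}$ and therefore cannot write concurrently. By introducing an auxiliary output buffer, this serialization is removed and full $8$-way parallelism is obtained at every level of recursion.

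Concretely, I would define $\proc{sub-gap}_\Box(A, B)$ as follows. First allocate a temporary $2$D region $A'$ of the same size as $A$, namely $O(n^2)$ cells. Then spawn all eight recursive subtasks in parallel: the ``first'' call of each of the four groups in \figref{cop-gap-box} writes to the corresponding $A_{ij}$, while the ``second'' call of each group is redirected to write to $A'_{ij}$ instead. After a synchronization point, merge $A = A + A'$ cell-wise (the ``$+$'' here denoting the $\min$ of the underlying closed semiring, as in \figref{sub-1D-box} and \lemref{mm-n3-space}), which has $O(1)$ critical-path length, then free $A'$. The vertical variant $\proc{sub-gap-v}_\Box$ is symmetric.

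With all eight recursive sub-cubes running concurrently, we obtain
\begin{align*}
T_\infty(n) &= T_\infty(n/2) + O(1), \\
T_1(n)      &= 8\, T_1(n/2) + O(n^2), \\
S(n)        &= 8\, S(n/2) + O(n^2), \\
Q_1(n)      &= 8\, Q_1(n/2) + O(n^2/B),
\end{align*}
with $O(1)$ base cases, which solve to $T_\infty(n) = O(\log n)$, $T_1(n) = O(n^3)$, $S(n) = O(n^3)$, and $Q_1(n) = O(n^3/B)$, matching the claim.

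The main point to verify is correctness: that redirecting the four ``second'' updates into $A'$ and then taking $A = A + A'$ is equivalent to the original serial-within-parallel ordering. This follows because each cell of $A_{ij}$ is updated by a $\min$ of contributions coming from two disjoint halves of $B$; so initializing $A'$ to $+\infty$, accumulating the ``second'' contribution into $A'$ via the same $\min$-$+$ recurrence, and finally merging $A$ with $A'$ cell-wise is semantically identical to performing the two $\min$-updates sequentially on $A$, by associativity and commutativity of $\min$.
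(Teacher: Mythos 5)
Your proposal is correct and takes essentially the same route as the paper, which justifies this lemma only by remarking that the $3$D cubes of $\proc{cop-gap-h}_{\Box}$ can be parallelized ``in the same way as'' \lemref{mm-n3-space} using temporary space; your construction (redirecting the second, CREW-serialized write of each group into a buffer $A'$ and merging by the semiring $\min$) is exactly that idea made explicit, and your recurrences and correctness argument are sound.
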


Combining Galil and Park's 
sublinear-time algorithm (\thmref{old-sublinear-gap}) with
the ND method of \thmref{nd-gap} for the 
$3$D triangular analogues, and \lemref{sub-gap-box}
for the $3$D cubes, we have the first work-efficient and
sublinear-time GAP algorithm as follows.
\begin{theorem}
    There is an ND algorithm that solves the GAP recurrences of
    \eqref{gap} in sublinear $O(n^{3/4} \log n)$ time, optimal
    $O(n^3)$ work, $O(n^3)$ space, and $O(n^3/B)$ cache bounds.  
\label{thm:sub-nd-gap}
\end{theorem}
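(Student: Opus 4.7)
The plan is to construct the claimed sublinear-time and work-efficient GAP algorithm by specializing the base cases of the ND framework of \thmref{nd-gap}, mirroring the $1$D-to-sublinear transition of \thmref{sub-nd-1D}. Specifically, I would stop the recursion of $\proc{nd-gap}_\tri$ (\figref{nd-gap-tri}) at size $v$, a parameter to be tuned, and at each size-$v$ leaf substitute one of two sublinear kernels: Galil and Park's algorithm (\thmref{old-sublinear-gap}) for a $3$D triangular analogue, and $\proc{sub-gap-box}$ (\lemref{sub-gap-box}) for a $3$D cube. This replaces the base case \eqref{nd-gap-small-fire-stop} by $T_{\infty, \tfire{\tri\Box}}(v) = T_{\infty, \tri}(v) + T_{\infty, \Box}(v) = O(\sqrt{v}\log v) + O(\log v) = O(\sqrt{v}\log v)$.

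With this new leaf, \eqref{nd-gap-small-fire} unfolds to $T_{\infty, \tfire{\tri\Box}}(n) = (n/v)\cdot O(\sqrt{v}\log v) = O((n/\sqrt{v})\log v)$. Plugging this into \eqref{nd-gap-big-fire}, and noting that $T_{\infty, \Box}(n/2)$ reduces to $O(\log n)$ via a direct call to \lemref{sub-gap-box} while $T_{\infty, \tri}(n/4)$ is bounded by the inductive hypothesis, the $\max$ clause is subsumed, yielding $T_{\infty, \tfire{\Gamma\Box}}(n) = O((n/\sqrt{v})\log v)$. The outer recurrence \eqref{nd-gap-tri} then unfolds as a geometric sum to the same bound, so $T_{\infty, \tri}(n) = O((n/\sqrt{v})\log v)$. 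Setting $v = \sqrt{n}$ balances the two sides and yields the claimed $O(n^{3/4}\log n)$ time.

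The main obstacle is confirming that total work remains at the optimal $O(n^3)$ despite substituting the non-work-efficient kernel of \thmref{old-sublinear-gap}, which performs $O(v^4)$ work per size-$v$ triangle. The key accounting is that in the $2$D projection of \figref{nd-gap-shape} the diagonal hosts only $O(n/v)$ triangular leaves of $3$D volume $O(v^3)$ each (summing to only $O(nv^2)$ of the total $O(n^3)$ computational volume), while the remaining off-diagonal $O(n^3)$ volume is handled by the work-efficient $\proc{sub-gap-box}$. Hence total work is $(n/v) \cdot O(v^4) + O(n^3) = O(nv^3 + n^3)$, which collapses to $O(n^3)$ whenever $v \leq n^{2/3}$, comfortably satisfied by $v = \sqrt{n}$. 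An analogous peak-concurrency argument bounds the aggregate Galil-and-Park space by $O(nv^3) = O(n^{5/2})$ and the aggregate $\proc{sub-gap-box}$ space by $O(n^3)$, giving the claimed $O(n^3)$ space; multiplying by $1/B$ yields the $O(n^3/B)$ cache complexity.
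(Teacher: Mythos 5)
Your overall construction is the same as the paper's: stop the ND recursion of \thmref{nd-gap} at size $v$, plug in \thmref{old-sublinear-gap} for the $3$D triangular analogues and \lemref{sub-gap-box} for the $3$D cubes, replace \eqref{nd-gap-small-fire-stop} by a base case of cost $O(\sqrt{v}\log v)$, unfold \eqref{nd-gap-small-fire} to $O((n/\sqrt{v})\log v)$, and set $v=\sqrt{n}$. The time analysis is fine. However, your work accounting --- which you correctly identify as the crux --- contains a genuine error. You claim that only the $O(n/v)$ blocks ``on the diagonal'' require the non-work-efficient Galil--Park triangular kernel. That picture is imported from the 1D problem, where the domain is an upper-triangular region and the self-contained triangles do sit on the diagonal. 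For GAP the domain is the full $n\times n$ table, and $\proc{cop-gap}_{\tri}$ recurses on \emph{all four} quadrants (see \figref{cop-gap-tri}), so \emph{every} one of the $O((n/v)^2)$ size-$v$ tiles needs its own self-update by the triangular kernel. The triangular work is therefore $O((n/v)^2)\cdot O(v^4)=O(n^2v^2)$, not $O(n/v)\cdot O(v^4)=O(nv^3)$.

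The consequence is that the constraint for work-efficiency is $n^2v^2=O(n^3)$, i.e.\ $v\le\sqrt{n}$, which is met \emph{exactly} --- not ``comfortably'' --- by the choice $v=\sqrt{n}$. Your claimed slack up to $v\le n^{2/3}$ is illusory; if it were real one could push $v$ to $n^{2/3}$ and obtain an $O(n^{2/3}\log n)$ time bound, which the true work constraint forbids. The same miscount propagates into your space estimate for the triangular kernels ($O(nv^3)=O(n^{5/2})$ where the correct figure is $O(n^2v^2)=O(n^3)$); the final $O(n^3)$ space and $O(n^3/B)$ cache bounds survive only because the cube kernels already contribute $O(n^3)$. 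To repair the proof, replace the leaf count by $O((n/v)^2)$ triangles and $O((n/v)^3)$ cubes and solve $O((n/v)^3)\cdot O(v^3)+O((n/v)^2)\cdot O(v^4)=O(n^3)$ for $v$, which is precisely how the paper pins down $v=n^{1/2}$.
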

\begin{proof}
Referring to \figref{gap-work}, we employ the algorithm 
of \lemref{sub-gap-box} to compute 
the $3$D cubes and the algorithm of \thmref{old-sublinear-gap} 
to compute the $3$D triangular analogues. 
At a high level, we preserve the ND scheduling as shown
by the recurrences of \eqrefTwo{nd-gap-tri}{nd-gap-small-fire} 
except that we stop the recursion of 
\eqref{nd-gap-small-fire} earlier than 
\eqref{nd-gap-small-fire-stop} at size $v$ as 
follows, where $v$ is a parameter to be determined later.
\begin{align}
& T_{\infty, \proc{nd-gap}_{\tfire{\tri\Box}}} (v) = T_{\infty, \proc{nd-gap}_{\tri}} (v) + T_{\infty, \proc{nd-gap}_{\Box}} (v) \label{eq:sub-nd-gap-small-fire-stop}
\end{align}
To bound the parameter $v$, we need to bound the total work of 
this hybrid algorithm to be the optimal $O(n^3)$. Since there are 
$O((n/v)^3)$ $3$D cubes, each of which takes $O(v^3)$ work 
(\lemref{sub-gap-box}), and $O((n/v)^2)$ $3$D triangular 
analogues, each of which takes $O(v^4)$ work 
(\thmref{old-sublinear-gap}), we have 
\eqref{gap-sub-nd-work} to bound the total work, which solves 
to $v = n^{1/2}$.
\begin{align}
    O((n/v)^3) \cdot O(v^3) + O((n/v)^2) \cdot O(v^{4}) &= O(n^3) \label{eq:gap-sub-nd-work}
\end{align}
Supplying this $v$ into \eqref{sub-nd-gap-small-fire-stop}
yields $T_{\infty, \proc{nd-gap}_{\tfire{\tri\Box}}} (v) = 
O(v^{1/2} \log v) = O(n^{1/4} \log n)$. Supplying this 
result into \eqrefTwo{nd-gap-tri}{nd-gap-small-fire} yields
the conclusion.
\end{proof}

\paragrf{Discussions :}
Our result of \thmref{nd-gap} improves over the prior 
cache-oblivious and cache-efficient algorithm of \lemref{cop-gap}
developed by Chowdhury and Ramachandran 
\cite{ChowdhuryRa06, Chowdhury07} on time bound without
sacrificing work, space, and cache efficiency.
By \thmref{sub-nd-gap}, we solve an open problem raised in 
Galil and Park's paper \cite{GalilPa94}, i.e. we have the
first work-efficient and sublinear-time GAP algorithm, though
its space and cache complexities are non-optimal, which will be 
left as yet another open problem for future work.

\secput{concl}{Conclusion and Related Works}
\paragrf{Concluding Remarks: }
Galil and Park \cite{GalilPa94} gave out several work-efficient
and sublinear-time algorithms for DP recurrences with more than
$O(1)$ dependency. However their algorithm for the general GAP
problem is not work-efficient.
Classic COP approach \cite{ChowdhuryRa06, Chowdhury07, 
ChowdhuryLeRa10} usually attains optimal work, space, and
cache bounds in a cache-oblivious fashion, but with
a super-linear time bound due to both the updating order
imposed by DP recurrences and excessive control
dependency introduced by their approach.
In this paper, we present a new framework to parallelize a 
DP computation based on a novel combination of the closure 
method and ND method. 
Our results not only improves the time bounds of classic
cache-oblivious parallel algorithms without sacrificing
work, space, and cache efficiency, but also
gives out the first work-efficient and
sublinear-time algorithm for the general GAP problem, thus 
provides an answer to
the open problem raised by Galil and Park \cite{GalilPa94}.

\paragrf{Open Problem: }
It remains an interesting problem whether it is possible
to further bound the space and cache bounds of the general 
GAP algorithms
to be asymptotically optimal in a cache-oblivious fashion
while keeping its work bound to be optimal and time bound to
be sublinear.

\paragrf{Related Works: }
Galil, Giancarlo, and Park \cite{GalilGi89, GalilPa92, GalilPa94} 
proposed to solve DP
recurrences with more than $O(1)$ dependency by the methods of 
closure, matrix product, and indirection. 
Maleki et al. \cite{MalekiMuMy14} presented in their paper
that certain dynamic programming problem called ``Linear-Tropical
Dynamic Programming (LTDP)'' can possibly obtain extra parallelism
by breaking data dependencies between stages.
Their approach is based on the property of rank convergence 
of matrix multiplication in linear algebra.
\punt{
Moreover, they multiply vector with matrix instead of matrix
with matrix to reduce total work. 
}
Their approach in the worst case can reduce to a serial algorithm.
%
Chowdhury and Ramachandran \cite{ChowdhuryRa08} considered a
processor-aware hybrid $r$-way divide-and-conquer algorithms 
with different values of $r$ at different levels of recursion. 
Their cache bounds match asymptotically that of the best serial 
algorithm. 
Chowdhury et al. \cite{ChowdhurySiBl10} proposed a 
multicore-oblivious (MO) method for a hierarchical multi-level 
caching (HM) model. By the MO method, an algorithm requires
no specifying of any machine parameters such as the number of
computing cores, number of cache levels, cache size, or block
transfer size. However, for improved performance, an MO algorithm
is allowed to provide advices or ``hints'' to the runtime scheduler
through a small set of instructions on how to schedule the
parallel tasks it spawns.
%
\punt{
General matrix multiplication can be viewed as a special case
of DP recurrence with $O(n)$ dependency.
There are wide studies on the problem or its Strassen-like fast
variants with various tradeoffs among work, 
space, time, and communication bounds \cite{BensonBa15,
BallardDeHo12, McCollTi99, SolomonikDe11,
KumarHuSa95, BoyerDuPe09, HuangSmHe16, SmithGeSm14}.
The basic idea of these works is to switch back and forth between
a serial algorithm to save and reuse space and a parallel
algorithm to increase parallelism. 
}
Shun et al. \cite{ShunBlFi13} proposed ``priority updates'' to
relax the serialization of ``concurrent writes'' to the same
memory cell, thus could possibly improve the time bounds in
some cases. However, not all operations can be prioritized.




\bibliographystyle{IEEEtran}
\bibliography{papers}


\end{document}